\newtheorem{theorem}{Theorem}
\newtheorem{lemma}{Lemma}
\newtheorem{remark}{Remark}
\begin{document}

\title{Low-Rank and Row-Sparse Decomposition for Joint DOA Estimation and Distorted Sensor Detection}

%
%
%

\author{Huiping Huang}
\member{Student Member, IEEE}
\affil{Darmstadt University of Technology, Germany} 

\author{Qi Liu}
\member{Member, IEEE}
\affil{South China University of Technology, China; \\
Pazhou Lab, Guangzhou 510330, China} 

\author{Hing C. So}
\member{Fellow, IEEE}
\affil{City University of Hong Kong, China}

 \author{Abdelhak M. Zoubir}
\member{Fellow, IEEE}
 \affil{Darmstadt University of Technology, Germany}

\receiveddate{This manuscript is submitted for review on XX.}

\corresp{ {\itshape (Corresponding author: H. Huang)}. }

\authoraddress{H. Huang and A. M. Zoubir are with Department of Electrical Engineering and Information Technology, Darmstadt University of Technology, Germany (emails: h.huang@spg.tu-darmstadt.de, zoubir@spg.tu-darmstadt.de). Q. Liu is with School of Future Technology, South China University of Technology, China, and also with Pazhou Lab, Guangzhou 510330, China (email: drliuqi@scut.edu.cn). H. C. So is with Department of Electrical Engineering, City University of Hong Kong, China (email: hcso@ee.cityu.edu.hk).}


\markboth{H. Huang \textit{et al.}}{LR$^{2}$SD FOR DOA ESTIMATION DISTORTED SENSOR DETECTION}
\maketitle

\begin{abstract}Distorted sensors could occur randomly and may lead to the breakdown of a sensor array system. We consider an array model within which a small number of sensors are distorted by unknown sensor gain and phase errors. With such an array model, the problem of joint direction-of-arrival (DOA) estimation and distorted sensor detection is formulated under the framework of low-rank and row-sparse decomposition. We derive an iteratively reweighted least squares (IRLS) algorithm to solve the resulting problem in both noiseless and noisy cases. The convergence property of the IRLS algorithm is analyzed by means of the monotonicity and boundedness of the objective function. Extensive simulations are conducted regarding parameter selection, convergence speed, computational complexity, and performances of DOA estimation as well as distorted sensor detection. Even though the IRLS algorithm is slightly worse than the alternating direction method of multipliers in detecting the distorted sensors, the results show that our approach outperforms several state-of-the-art techniques in terms of convergence speed, computational cost, and DOA estimation performance.
\end{abstract}

\begin{IEEEkeywords}Alternating direction method of multipliers, distorted sensor, DOA estimation, iteratively reweighted least squares, low-rank and row-sparse decomposition
\newline
\newline
\newline
\end{IEEEkeywords}

\section{INTRODUCTION}
\label{introduction}

D{\scshape irection}-of-arrival (DOA) estimation is one of the most important topics in array signal processing, which has found numerous applications in radar, sonar, wireless communications, to name just a few \cite{Krim1996, VanTrees2002, Viberg2014}. Many classical approaches have been proposed, including multiple signal classification (MUSIC) \cite{Schmidt1986}, estimation of signal parameters via rotational invariance techniques (ESPRIT) \cite{Roy1989}, and maximum likelihood methods \cite{Bohme1986, Ziskind1988}. However, it is known that most of these high-resolution algorithms rely heavily on the exact knowledge of the array manifold, and hence their performance may greatly suffer when the sensor array encounters distortions \cite{Vorobyov2003, Wang2017Jul, Wang2017Dec, Yang2017, Ramamohan2018, Huang2019}, such as unknown sensor gain and phase uncertainties, which is the focus of this work. {\color{black}More recently, techniques based on low-rank and sparse matrix decomposition have been applied to DOA estimation or tracking, see e.g. \cite{Lin2015, Das2017, Markopoulos2014, Markopoulos2019}. However, these works merely consider the well-calibrated array, and they are not straightforwardly applicable to an array with sensor errors.}

There is a large number of works devoted to handle distorted or completely failed sensors \cite{Yeo1999, Vigneshwaran2007, Muma2012, Oliveri2012, Zhu2015, Wang2017, Liu2019June, Stoica1996, Ng2009, Jiang2013, Pesavento2002, See2004, Liao2012, Steffens2014, Liao2014, Suleiman2018, Afkhaminia2017, Lee2019, Huang2021}. In \cite{Yeo1999}, the genetic algorithm \cite{Holland1992} was applied for array failure correction. A minimal resource allocation network was used for DOA estimation under array sensor failure \cite{Vigneshwaran2007}, which requires a training procedure with no failed sensors. A Bayesian compressive sensing approach was proposed in \cite{Oliveri2012}, which needs a noise-free array as a reference. Methods using difference co-array were developed in \cite{Zhu2015, Wang2017, Liu2019June}. The idea of \cite{Zhu2015} was based on the fact that positions corresponding to damaged sensors may be occupied by virtual sensors and thus the impact of sensor failure could be avoided. However, this is not applicable when the failed sensors are located on the first or last position of the array, or when the malfunctioned sensors occur on symmetrical positions of the array, in which situations there exist \textit{holes} in the difference co-array. On the other hand, \cite{Wang2017} and \cite{Liu2019June} restricted the array to some special sparse structures, such as co-prime and nested arrays. Approaches based on pre-calibrated sensors have been well-documented in the past decades \cite{Pesavento2002, See2004, Liao2012, Steffens2014, Liao2014, Suleiman2018}. These methods require the knowledge of the calibrated sensors and they are time- and energy-consuming. 


To circumvent the above-mentioned shortcomings, and to tackle the DOA estimation problem with an array in which a few sensors are distorted by unknown sensor gain and phase uncertainties, we formulate the problem under the framework of low-rank and row-sparse decomposition (LR$^{2}$SD), which can be regarded as a special structure of low-rank and sparse decomposition (LRSD). {\color{black}Note that LRSD is also known as robust principal component analysis (RPCA) \cite{Vaswani2018Jul, Vaswani2018Aug, Bouwmans2018}.} The LRSD technique has become a popular tool in finding a low-dimensional subspace from sparsely and arbitrarily corrupted observations, and it has wide applications in science and engineering, ranging from bioinformatics, web search, to imaging, audio and video processing \cite{Jolliffe1986_chapter, Wright2009, Zhang2011, Lin2013, Bando2018}. Another special structure of LRSD is low-rank and column-sparse decomposition (LRCSD) \cite{Xu2012, Liu2010, Liu2013, Lu2015(b), Liu2019}, {\color{black}also known as RPCA-outlier pursuit \cite{Zhang2015, Li2018, Guyon2012, Rodreguez2014},} which has been recently proposed to handle the scenarios where corruptions take place column-sparsely, meaning that the corruption matrix is column-wise sparse. Such situations occur for example when a fraction of the data vectors are grossly corrupted by outliers \cite{Liu2013, Liu2019}.

Several algorithms have been contributed to solve the LRSD and LRCSD problems, such as singular value thresholding (SVT) \cite{Cai2010}, accelerated proximal gradient (APG) \cite{Beck2009}, alternating direction method of multipliers (ADMM) \cite{Lin2013, Liu2019}, and iteratively reweighted least squares (IRLS) \cite{Lu2015(b), Liu2019, Guyon2012, Rodreguez2014}. The SVT, APG, and ADMM methods will be reviewed in Section \ref{relatedmethods} in the context of joint DOA estimation and distorted sensor detection. The above three methods require one singular value decomposition (SVD) in each iteration, which may be unbearable for large scale problems. Instead, IRLS relies on simple linear algebra, and it generally has a linear convergence rate \cite{Daubechies2009, Ba2014, Ene2019, Straszak2021, Kummerle2021}. In this sense, the IRLS is more efficient in solving the corresponding problems.

Therefore, in the present work, we develop an IRLS algorithm for joint DOA estimation and distorted sensor detection. The main contributions include:
\begin{itemize}
\item Both noiseless and noisy cases are considered. The convergence property of the algorithm is analyzed, via the monotonicity and boundedness of the objective function.
\item The computational complexities of the IRLS algorithm as well as the SVT, APG, and ADMM methods are theoretically analyzed.
\item Extensive simulations are conducted in view of parameter selection, convergence speed, computational time, and performance of DOA estimation and distorted sensor detection.
\end{itemize}

The remainder of the paper is organized as follows. The signal model and problem statement are established in Section \ref{signalmodel}. A review of state-of-the-art works is provided in Section \ref{relatedmethods}. Section \ref{proposedmethod} derives an IRLS algorithm for joint DOA estimation and distorted sensor detection. Numerical results are given in Section \ref{simulation}, while Section \ref{conclusion} concludes this paper.

\textit{Notation:} In this paper, bold-faced lower-case and upper-case letters stand for vectors and matrices, respectively. Superscripts $\cdot^{\rm{T}}$ and $\cdot^{\rm{H}}$ denote transpose and Hermitian transpose, respectively. $\mathbb{C}$ is the set of complex numbers, and $\jmath = \sqrt{-1}$. For a real-valued scalar $a$, $|a|$ denotes its absolute value. The minimum value of two scalars $a$ and $b$ is denoted as $\min\{a, b\}$. $\|\cdot\|_{2}$ is the $\ell_{2}$ norm of a vector. $\|\cdot\|_{\rm{F}}$ and $\|\cdot\|_{*}$ represent the Frobenius norm and the nuclear norm (sum of singular values) of a matrix, respectively. $\|\cdot\|_{2,0}$ and $\|\cdot\|_{2,1}$ denote the $\ell_{2,0}$ mixed-norm and $\ell_{2,1}$ mixed-norm of a matrix, respectively, whose definitions are given as $\|{\bf V}\|_{2,0} \triangleq \text{card}(\{\|{\bf V}_{i,:}\|_{2}\})$ and $\|{\bf V}\|_{2,1} \triangleq \sum_{i = 1}^{M}\|{\bf V}_{i,:}\|_{2}$, for ${\bf V} \in \mathbb{C}^{M \times T}$, where $\text{card}(\cdot)$ is the cardinality of a set, $\{\|{\bf V}_{i,:}\|_{2}\} = \{ \|{\bf V}_{1,:}\|_{2}, \|{\bf V}_{2,:}\|_{2}, \cdots, \|{\bf V}_{M,:}\|_{2} \}$, and ${\bf V}_{i,:}$ is the $i$th row of ${\bf V}$. $\rm{rank}(\cdot)$ is the rank operator, defined as $ \text{rank}({\bf Z}) \triangleq \text{card}(\{\sigma_{i}({\bf Z})\})$, with $\sigma_{i}({\bf Z})$ being the $i$th singular value of ${\bf Z}$ and $\{\sigma_{i}({\bf Z})\}$ denoting the set containing all singular values of ${\bf Z}$. For two matrices ${\bf X}$ and ${\bf Y}$ of the same dimensions, we define their Frobenius inner product as $\langle {\bf X} , {\bf Y} \rangle \triangleq \text{trace}({\bf X}^{\text{H}}{\bf Y})$, where $\rm{trace}(\cdot)$ denotes the trace of a square matrix.

\section{SIGNAL MODEL AND PROBLEM STATEMENT}
\label{signalmodel}

Suppose that a linear antenna array of $M$ sensors receives $K$ far-field narrowband signals from directions ${\bm \theta} = [\theta_{1}, \theta_{2}, \cdots, \theta_{K}]^{\text{T}}$. The antenna array of interest is assumed to be randomly and \textit{sparsely} distorted by sensor gain and phase uncertainty (the number of distorted sensors is far smaller than $M$). Further, we assume that the number of distorted sensors and their positions are unknown. Fig. \ref{SCA} illustrates the array model, where the black circles stand for \textit{perfect} sensors and the boxes refer to distorted ones. The boxes appear randomly and sparsely within the whole linear array.

\begin{figure}[t]
\centering
{\includegraphics[width=0.5\textwidth]{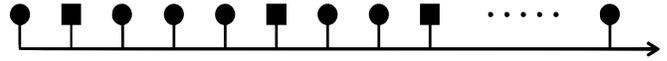}} \vspace{-4mm}
\caption{Illustration of array structure of interest.}
\label{SCA}
\end{figure}

The array observation can be written as
\begin{align*}
{\bf y}(t) = \breve{\bf \Gamma}{\bf A}{\bf s}(t) + {\bf n}(t) \triangleq ({\bf I} + {\bf \Gamma}){\bf A}{\bf s}(t) + {\bf n}(t),
\end{align*}
where $t = 1, 2, \cdots, T$ denotes the time index, $T$ is the total number of available snapshots, ${\bf s}(t) \in \mathbb{C}^{K}$ and ${\bf n}(t) \in \mathbb{C}^{M}$ are signal and noise vectors, respectively. The steering matrix ${\bf A} = [{\bf a}(\theta_{1}), {\bf a}(\theta_{2}), \cdots, {\bf a}(\theta_{K})] \in \mathbb{C}^{M \times K}$ has steering vectors as columns, where the steering vector ${\bf a}(\theta_{k})$ is a function of $\theta_{k}$, for $k = 1, 2, \cdots, K$. In addition, $\breve{\bf \Gamma} \triangleq {\bf I} + {\bf \Gamma}$ indicates the electronic sensor status (either perfect or distorted), where ${\bf I}$ is the $M \times M$ identity matrix, and ${\bm \Gamma}$ is a diagonal matrix with its main diagonal, ${\bm \gamma} = [\gamma_{1}, \gamma_{2}, \cdots, \gamma_{M}]^{\text{T}}$, being a sparse vector. Specifically, for $m = 1, 2, \cdots, M$
\begin{align*}
\gamma_{m} \left\{ \begin{array}{l}
			\!\! = 0, ~ \text{if the $m$th sensor is perfect,} \\
			\!\! \neq 0, ~ \text{if the $m$th sensor is distorted.}
\end{array} \right.
\end{align*}
The non-zero $\gamma_{m}$ denotes sensor gain and phase error, namely, $\gamma_{m} = \rho_{m}e^{\jmath \phi_{m}}$, where $\rho_{m}$ and $\phi_{m}$ are the gain and phase errors of the $m$th sensor, respectively.

Collecting all the snapshots into a matrix, we have
\begin{align}
\label{datamodel_matrix}
{\bf Y} = ({\bf I} + {\bm \Gamma}){\bf A}{\bf S} + {\bf N},
\end{align}
where ${\bf Y} = [{\bf y}(1), {\bf y}(2), \cdots, {\bf y}(T)] \in \mathbb{C}^{M \times T}$ contains the {\color{black}measurements}, ${\bf S} = [{\bf s}(1), {\bf s}(2), \cdots, {\bf s}(T)] \in \mathbb{C}^{K \times T}$ denotes the signal matrix, and ${\bf N} = [{\bf n}(1), {\bf n}(2), \cdots, {\bf n}(T)] \in \mathbb{C}^{M \times T}$ is the noise matrix. Defining ${\bf Z} \triangleq {\bf A}{\bf S}$ and ${\bf V} \triangleq {\bm \Gamma}{\bf A}{\bf S}$, (\ref{datamodel_matrix}) becomes:
\begin{align}
\label{datamodel_ZV}
{\bf Y} = {\bf Z} + {\bf V} + {\bf N},
\end{align}
where ${\bf Z} \in \mathbb{C}^{M \times T}$ is a low-rank matrix of rank $K$ (in general $K < \min\{M, T\}$), and ${\bf V} \in \mathbb{C}^{M \times T}$ is a row-sparse (meaning that only a few rows are non-zero) matrix due to the sparsity of the main diagonal of ${\bm \Gamma}$.

Given the array {\color{black}measurements} ${\bf Y}$, our task is to simultaneously estimate the incoming directions of signals and detect the distorted sensors within the array. Note that the number of distorted sensors is small, but unknown, and their positions are unknown as well.


{\color{black}
\section{RELATED WORKS}
\label{relatedmethods}
Related works for solving the joint DOA estimation and distorted sensor detection include SVT, APG, and ADMM. The SVT method was first proposed for matrix completion, see for example \cite{Cai2010}. By adapting the SVT algorithm to our problem, we need to solve
\begin{align}
\min_{{\bf Z}, {\bf V}, {\bf W}} ~ \|{\bf Z}\|_{*} & + \lambda\|{\bf V}\|_{2,1} + \frac{1}{2\tau}\|{\bf Z}\|_{\text{F}}^{2} \nonumber  \\
& + \frac{1}{2\tau}\|{\bf V}\|_{\text{F}}^{2} + \frac{1}{\tau}\langle {\bf W} , {\bf Y} \! - \! {\bf Z} \! - \! {\bf V} \rangle,
\end{align}
where $\lambda$ is a tuning parameter, $\tau$ is a large positive scalar such that the objective function is perturbed slightly. The SVT approach iteratively updates ${\bf Z}$, ${\bf V}$, and ${\bf W}$. ${\bf Z}$ and ${\bf V}$ are updated by solving the above problem with ${\bf W}$ fixed. Then ${\bf W}$ is updated as ${\bf W} = {\bf Y} - {\bf Z} - {\bf V}$. The following well-known results are used when updating ${\bf Z}$ and ${\bf V}$ \cite{Cai2010}:
\begin{align*}
{\bf L}\mathcal{S}_{\kappa}({\bf S}){\bf R}^{\mathrm{H}} = & \arg\min_{{\bf X}} ~ \kappa\|{\bf X}\|_{*} + \frac{1}{2}\|{\bf X} - {\bf C}\|_{\mathrm{F}}^{2}, \\
\mathcal{S}_{\kappa}({\bf C}) = & \arg\min_{{\bf X}} ~ \kappa\|{\bf X}\|_{2,1} + \frac{1}{2}\|{\bf X} - {\bf C}\|_{\mathrm{F}}^{2},
\end{align*}
where ${\bf L}{\bf S}{\bf R}^{\mathrm{H}}$ is the SVD of ${\bf C}$ and the element-wise soft-thresholding operator is defined as:
\begin{align*}
\mathcal{S}_{\kappa}(x) = \left\{ 
\begin{array}{l}
x - \kappa, \text{~if~} x > \kappa, \\
x + \kappa, \text{~if~} x < \kappa, \\
0, \text{~otherwise},
\end{array}
\right.
\end{align*}
with parameter $\kappa > 0$. The applicability of SVT is limited since it is difficult to select the step size for speedup \cite{Lin2013}.

The second method is APG, whose updating equation can be given as \cite{Beck2009}
\begin{align}
({\bf Z}_{k+1} , {\bf V}_{k+1}) = \arg\min_{{\bf Z}, {\bf V}} ~ h({\bf Z}, {\bf V})
\end{align}
where subscript $\cdot_{k}$ denotes the variable in the $k$th iteration, $h({\bf Z}, {\bf V}) \triangleq p({\bf Z}_{k}, {\bf V}_{k}) + \langle \nabla_{{\bf Z}_{k}}p({\bf Z}, {\bf V}_{k}) , {\bf Z} - {\bf Z}_{k} \rangle + \langle \nabla_{{\bf V}_{k}}p({\bf Z}_{k}, {\bf V}) , {\bf V} - {\bf V}_{k} \rangle + \mu M \|{\bf Z} + {\bf V} - {\bf Z}_{k} - {\bf V}_{k}\|_{\mathrm{F}}^{2} + q({\bf Z}, {\bf V})$, with $p({\bf Z}, {\bf V}) \triangleq \frac{1}{\mu}\|{\bf Y} - {\bf Z} - {\bf V}\|_{\mathrm{F}}^{2}$, $q({\bf Z}, {\bf V}) \triangleq \|{\bf Z}\|_{*} + \lambda\|{\bf V}\|_{2,1}$, and $\mu$ being a small positive scalar. The detailed algorithm can be found in \cite{Beck2009} and also \cite{Lin2013}.

As for ADMM, we consider the following problem
\begin{align}
\min_{{\bf Z}, {\bf V}} ~ \|{\bf Z}\|_{*} & + \lambda\|{\bf V}\|_{2,1} \quad \text{s.t.} ~ {\bf Y} = {\bf Z} + {\bf V},
\end{align}
and its augmented Lagrangian function is $\mathcal{L}_{\mu}({\bf Z}, {\bf V}, {\bf W}) = \|{\bf Z}\|_{*} + \lambda\|{\bf V}\|_{2,1} + \langle {\bf W} , {\bf Y} \! - \! {\bf Z} \! - \! {\bf V} \rangle + \frac{\mu}{2}\|{\bf Y} \! - \! {\bf Z} \! - \! {\bf V}\|_{\text{F}}^{2}$, where ${\bf W}$ denotes the dual variable and $\mu$ is the augmented Lagrangian parameter. Then ADMM updates ${\bf Z}$, ${\bf V}$, and ${\bf W}$, in a sequential manner. ${\bf Z}$ and ${\bf V}$ are solved by minimizing $\mathcal{L}_{\mu}({\bf Z}, {\bf V}, {\bf W})$ with respect to ${\bf Z}$ (resp. ${\bf V}$) while keeping ${\bf V}$ (resp. {\bf Z}) and ${\bf W}$ unchanged; ${\bf W}$ is updated as ${\bf W} = {\bf W} + \mu({\bf Y} - {\bf Z} - {\bf V})$ \cite{Boyd2011}.

All the aforementioned three algorithms require performing one SVD per iteration. Therefore, their computational complexity is extremely high, especially when the problem size is large. Their convergence speed and computational cost will be compared in Sections \ref{convergencespeed} and \ref{complexity}, respectively.
}

\section{PROPOSED METHOD}
\label{proposedmethod}

In this section, we develop an IRLS algorithm for the task of jointly estimating DOAs of sources and detecting distorted sensors. We start by considering the noiseless case, and then focus on the noisy case.

\subsection{NOISELESS CASE}
\label{noiselesscase}
In the noiseless case, the data model (\ref{datamodel_ZV}) is simplified as ${\bf Y} = {\bf Z} + {\bf V}$.
Therefore, we formulate the following LR$^{2}$SD problem, as
\begin{align}
\min_{{\bf Z}, {\bf V}} ~ \text{rank}({\bf Z}) + \lambda\|{\bf V}\|_{2,0} \quad \text{s.t.} ~ {\bf Y} = {\bf Z} + {\bf V},
\end{align}
where $\lambda$ is a tuning parameter. By substituting the equality constraint into the objective, and replacing the rank and $\ell_{2,0}$ mixed-norm with the nuclear norm and $\ell_{2,1}$ mixed-norm, respectively, we have its convex counterpart, as
\begin{align}
\min_{{\bf Z}} ~ \|{\bf Z}\|_{*} + \lambda\|{\bf Y} \! - \! {\bf Z}\|_{2,1}.
\end{align}

The nuclear norm and the $\ell_{2,1}$ mixed-norm are non-smooth, and thus they are not differentiable at some points. 
%
{\color{black}To deal with this issue, we  introduce a smoothing parameter $\mu$}, and obtain the gradients as 
\begin{align*}
\frac{\partial \|[{\bf Z} , \mu{\bf I}]\|_{*}}{\partial {\bf Z}} & = {\bf P}{\bf Z} \\
\frac{\partial \|[{\bf Y} \! - \! {\bf Z}, \mu{\bf 1}]\|_{2,1}}{\partial {\bf Z}} & =  {\bf Q}({\bf Z} \! - \! {\bf Y}),
\end{align*}
where ${\bf 1}$ is an all-ones vector of appropriate length, ${\bf P} \triangleq \left({\bf Z}{\bf Z}^{\text{H}}  +  \mu^{2}{\bf I}\right)^{-\frac{1}{2}}$ and 
\begin{align}
\label{Q_noiseless}
{\bf Q} \triangleq  \left[ \!\!
\begin{array}{ccc}
\frac{1}{\sqrt{\|({\bf Y} - {\bf Z})_{1,:}\|_{2}^{2} ~\! + ~\! \mu^{2}}} & & \\
 & \!\!\!\! \ddots \!\!\!\! & \\
 & & \frac{1}{\sqrt{\|({\bf Y} - {\bf Z})_{M,:}\|_{2}^{2} ~\! + ~\! \mu^{2}}} 
\end{array} \!\! \right] \! .
\end{align}
The problem to be solved now turns to be
\begin{align}
\label{prob_noiseless}
\min_{{\bf Z}} ~ f({\bf Z}) \triangleq \|[{\bf Z}, \mu{\bf I}]\|_{*} + \lambda\|[{\bf Y} \! - \! {\bf Z}, \mu{\bf 1}]\|_{2,1},
\end{align}
where the objective function $f({\bf Z})$ is differentiable everywhere w.r.t. ${\bf Z}$, as long as $\mu \neq 0$. The derivative of $f({\bf Z})$ w.r.t. ${\bf Z}$ is
\begin{align*}
\frac{\partial f({\bf Z})}{\partial {\bf Z}} = {\bf P}{\bf Z} + \lambda{\bf Q}({\bf Z} \! - \! {\bf Y}).
\end{align*}
According to the Karush–Kuhn–Tucker (KKT) condition, we have ${\bf P}{\bf Z} + \lambda{\bf Q}({\bf Z} \! - \! {\bf Y}) = {\bf 0}$, indicating that ${\bf Z} = \lambda({\bf P} + \lambda{\bf Q})^{-1}{\bf Q}{\bf Y}$. This leads to the IRLS iterative process as
\begin{align}
\label{irls_progress_noiseless}
{\bf Z}_{k+1} = \lambda({\bf P}_{k} + \lambda{\bf Q}_{k})^{-1}{\bf Q}_{k}{\bf Y},
\end{align}
where both ${\bf P}_{k}$ and ${\bf Q}_{k}$ are dependent on ${\bf Z}_{k}$. The IRLS algorithm for the noiseless case is summarized in Algorithm \ref{IRLS_noiseless}, where $\epsilon$ is a small scalar and $k_{\text{max}}$ is a large scalar, used to terminate the algorithm.


\begin{algorithm}[t]
	\caption{IRLS algorithm for noiseless case}
	\label{IRLS_noiseless}
	\hspace*{\algorithmicindent} \textbf{Input~~~~\!:} ${\bf Y} \in \mathbb{C}^{M \times T}$, $\lambda$, $\mu$, $\epsilon$, $k_{\text{max}}$ \\
	\hspace*{\algorithmicindent} \textbf{Output~~\!:} ${\bf{\widehat Z}} \in \mathbb{C}^{M \times T}$, ${\bf{\widehat V}} \in \mathbb{C}^{M \times T}$ \\
	\hspace*{\algorithmicindent} \textbf{Initialize:} ${\bf Z}_{0} \gets {\bf Z}_{\text{init}}$, ${\bf V}_{0} \gets {\bf V}_{\text{init}}$, $k \gets 0$
	\begin{algorithmic}[1]
		\While {not converged}
		\State $k \gets k+1$
		\State calculate ${\bf P}_{k}$ and ${\bf Q}_{k}$ 
		\State update ${\bf Z}_{k}$ using ${\bf Z} = \lambda({\bf P} + \lambda{\bf Q})^{-1}{\bf Q}{\bf Y}$ 
		\State converged $\gets$ $ k \geq k_{\text{max}} $ or {\color{black}$ \frac{| f({\bf Z}_{k}) - f({\bf Z}_{k-1}) |}{| f({\bf Z}_{k}) | } \leq \epsilon $}
		\EndWhile \State \textbf{end while} \\
		${\bf{\widehat Z}} \gets {\bf Z}_{k}$, ${\bf{\widehat V}} \gets {\bf Y} \! - \! {\bf Z}_{k}$
	\end{algorithmic}
\end{algorithm}

\subsection{CONVERGENCE ANALYSIS FOR ALGORITHM \ref{IRLS_noiseless}}
\label{convergence_noiseless}
We first provide two lemmata giving two important inequalities regarding the trace function and the $\ell_{2,1}$ mixed-norm. Then, we prove the monotonicity and the boundedness of the objective function in Problem (\ref{prob_noiseless}).

\begin{lemma}[Lemma 2 in \cite{Lu2015(b)}]
\label{lemmaTrace}
For any two symmetric positive definite matrices ${\bf X}$ and ${\bf Y}$, it holds that ${\rm{trace}} \! \left({\bf Y}^{\frac{1}{2}}\right) - {\rm{trace}} \! \left({\bf X}^{\frac{1}{2}}\right) \geq {\rm{trace}} \! \left(\frac{1}{2}({\bf Y} - {\bf X})^{\rm{H}}{\bf Y}^{-\frac{1}{2}}\right)$.
\end{lemma}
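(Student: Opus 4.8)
The plan is to recognize this as the first-order (supporting-hyperplane) inequality expressing concavity of the map ${\bf X} \mapsto {\rm trace}({\bf X}^{\frac{1}{2}})$ on the cone of Hermitian positive definite matrices, but to establish it by a short self-contained algebraic argument rather than invoking matrix concavity as a black box. Since ${\bf X}$ and ${\bf Y}$ are Hermitian positive definite, their unique Hermitian positive-definite square roots exist, so I would substitute ${\bf A} \triangleq {\bf Y}^{\frac{1}{2}}$ and ${\bf B} \triangleq {\bf X}^{\frac{1}{2}}$, giving ${\bf Y} = {\bf A}^{2}$, ${\bf X} = {\bf B}^{2}$, and ${\bf Y}^{-\frac{1}{2}} = {\bf A}^{-1}$. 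Because both matrices are Hermitian, $({\bf Y} - {\bf X})^{\rm H} = {\bf Y} - {\bf X}$, so the right-hand side is simply $\tfrac{1}{2}{\rm trace}\bigl(({\bf A}^{2} - {\bf B}^{2}){\bf A}^{-1}\bigr)$.

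Next I would move everything to one side. Expanding the right-hand side as $\tfrac{1}{2}{\rm trace}({\bf A} - {\bf B}^{2}{\bf A}^{-1})$, the difference (left minus right) collapses to $\tfrac{1}{2}{\rm trace}({\bf A} - 2{\bf B} + {\bf B}^{2}{\bf A}^{-1})$, and the task becomes showing this trace is nonnegative. The crucial step—and the only place where the non-commutativity of ${\bf A}$ and ${\bf B}$ genuinely matters—is to invoke the cyclic property of the trace to rewrite ${\rm trace}({\bf B}^{2}{\bf A}^{-1}) = {\rm trace}({\bf B}{\bf A}^{-1}{\bf B})$. After this symmetrization the integrand factors as a completed square, namely ${\bf A} - 2{\bf B} + {\bf B}{\bf A}^{-1}{\bf B} = ({\bf A} - {\bf B}){\bf A}^{-1}({\bf A} - {\bf B})$, which one verifies by direct expansion.

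Finally I would observe that ${\bf A} - {\bf B}$ is Hermitian and ${\bf A}^{-1}$ is Hermitian positive definite, so $({\bf A} - {\bf B}){\bf A}^{-1}({\bf A} - {\bf B})$ is a congruence of the form ${\bf C}^{\rm H}{\bf A}^{-1}{\bf C}$ with ${\bf C} = {\bf A} - {\bf B}$, hence positive semidefinite; its trace is therefore $\geq 0$, which is exactly the claimed inequality, with equality precisely when ${\bf A} = {\bf B}$, i.e. ${\bf X} = {\bf Y}$.

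I expect the main obstacle to be purely the bookkeeping around non-commutativity: the naive attempt to treat ${\bf A}$ and ${\bf B}$ as though they commuted (or to diagonalize them simultaneously) does not work, and the entire argument hinges on the single cyclic-trace manipulation that converts the asymmetric product ${\bf B}^{2}{\bf A}^{-1}$ into the symmetric form ${\bf B}{\bf A}^{-1}{\bf B}$ required to complete the square; everything else is routine algebra. Since this statement is quoted as Lemma 2 of \cite{Lu2015(b)}, the cited reference may be invoked directly, but the sketch above furnishes a short independent verification.
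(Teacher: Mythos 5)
Your proof is correct. Note that for this particular statement the paper offers no proof of its own: the lemma is quoted verbatim as Lemma 2 of \cite{Lu2015(b)} and the paper defers entirely to that reference, so your argument serves as a self-contained verification where the paper has only a citation. Checking the steps: with ${\bf A} = {\bf Y}^{\frac{1}{2}}$, ${\bf B} = {\bf X}^{\frac{1}{2}}$, the difference of the two sides is indeed $\frac{1}{2}\mathrm{trace}({\bf A} - 2{\bf B} + {\bf B}^{2}{\bf A}^{-1})$; the cyclic-trace step $\mathrm{trace}({\bf B}^{2}{\bf A}^{-1}) = \mathrm{trace}({\bf B}{\bf A}^{-1}{\bf B})$ is valid; the completed square $({\bf A}-{\bf B}){\bf A}^{-1}({\bf A}-{\bf B}) = {\bf A} - 2{\bf B} + {\bf B}{\bf A}^{-1}{\bf B}$ expands correctly; and positive semidefiniteness follows from the congruence ${\bf C}^{\mathrm{H}}{\bf A}^{-1}{\bf C}$ with ${\bf C} = {\bf A}-{\bf B}$ Hermitian, so the trace is nonnegative, with your equality characterization ${\bf X}={\bf Y}$ also right since $\mathrm{trace}({\bf C}^{\mathrm{H}}{\bf A}^{-1}{\bf C}) = \|{\bf A}^{-\frac{1}{2}}{\bf C}\|_{\mathrm{F}}^{2}$. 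One small observation: the completed-square identity holds at the matrix level, not merely under the trace, so the sole role of cyclicity is to replace the non-Hermitian product ${\bf B}^{2}{\bf A}^{-1}$ by the symmetric ${\bf B}{\bf A}^{-1}{\bf B}$, exactly as you diagnose; alternatively one can bypass the trace manipulation altogether by noting the matrix inequality $({\bf A}^{-\frac{1}{2}}{\bf B}-{\bf A}^{\frac{1}{2}})^{\mathrm{H}}({\bf A}^{-\frac{1}{2}}{\bf B}-{\bf A}^{\frac{1}{2}}) = {\bf A} - 2{\bf B} + {\bf B}{\bf A}^{-1}{\bf B} \succeq {\bf 0}$ and taking traces. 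Either route gives an elementary proof of this first-order concavity inequality for ${\bf X} \mapsto \mathrm{trace}({\bf X}^{\frac{1}{2}})$ that is independent of \cite{Lu2015(b)}, which is a modest but genuine gain over the paper's treatment.
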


\begin{lemma}
\label{lemmaL21norm}
For any matrices ${\bf X}$ and ${\bf Y}$ $\in \mathbb{C}^{M \times T}$, we have $\|{\bf Y}\|_{2,1} - \|{\bf X}\|_{2,1} \geq \frac{1}{2}\rm{trace} \! \left({\bf H}\left({\bf Y}{\bf Y}^{\rm{H}} - {\bf X}{\bf X}^{\rm{H}}\right)\right)$, where
\begin{align}
\label{eq_lemma}
{\bf H} = \left[ \!\! \begin{array}{ccc}
\frac{1}{\|{\bf Y}_{1,:}\|_{2}} & & \\
 & \!\! \ddots \!\! & \\
 & & \frac{1}{\|{\bf Y}_{M,:}\|_{2}}
\end{array} \!\! \right] \! .
\end{align}
\end{lemma}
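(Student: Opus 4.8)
The plan is to exploit the fact that ${\bf H}$ is diagonal, so that the matrix inequality decouples into a sum of elementary scalar inequalities, one per row. First I would evaluate the two trace terms on the right-hand side. Since ${\bf H}$ is diagonal, only the diagonal entries of ${\bf Y}{\bf Y}^{\rm H}$ and ${\bf X}{\bf X}^{\rm H}$ survive in the products, and these are precisely the squared row norms $({\bf Y}{\bf Y}^{\rm H})_{ii} = \|{\bf Y}_{i,:}\|_2^2$ and $({\bf X}{\bf X}^{\rm H})_{ii} = \|{\bf X}_{i,:}\|_2^2$. Combining this with the definition of ${\bf H}$ in (\ref{eq_lemma}) yields
\begin{align*}
\text{trace}({\bf H}{\bf Y}{\bf Y}^{\rm H}) = \sum_{i=1}^{M} \|{\bf Y}_{i,:}\|_2 = \|{\bf Y}\|_{2,1}, \quad \text{trace}({\bf H}{\bf X}{\bf X}^{\rm H}) = \sum_{i=1}^{M} \frac{\|{\bf X}_{i,:}\|_2^2}{\|{\bf Y}_{i,:}\|_2}.
\end{align*}

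Then I would observe that, after this substitution, both sides of the claimed bound are sums over $i = 1, \ldots, M$, so it suffices to establish the inequality row by row. Writing $y \triangleq \|{\bf Y}_{i,:}\|_2 > 0$ and $x \triangleq \|{\bf X}_{i,:}\|_2 \geq 0$, the $i$th contribution reduces to the scalar claim $y - x \geq \frac{1}{2}\left(y - \frac{x^2}{y}\right)$. Multiplying through by $2y > 0$ and rearranging, this is equivalent to $(y - x)^2 \geq 0$, which holds trivially. Equivalently, the same scalar inequality is the first-order (tangent-line) bound for the concave function $\sqrt{t}$ evaluated at $t = y^2$ and $t = x^2$, which makes transparent that Lemma \ref{lemmaL21norm} is the $\ell_{2,1}$ analogue of the trace bound in Lemma \ref{lemmaTrace}. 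Summing the $M$ row inequalities then recovers the claimed result.

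There is no genuinely hard step here; the proof is elementary once the diagonal structure is used to collapse the trace into row norms. The only points requiring care are bookkeeping ones: verifying that the off-diagonal entries of ${\bf Y}{\bf Y}^{\rm H}$ and ${\bf X}{\bf X}^{\rm H}$ play no role, which is immediate from $\text{trace}({\bf H}{\bf M}) = \sum_i {\bf H}_{ii}{\bf M}_{ii}$ for diagonal ${\bf H}$, and noting the standing assumption that every row of ${\bf Y}$ is nonzero, so that ${\bf H}$ in (\ref{eq_lemma}) is well-defined and each $y > 0$ (guaranteeing the division by $2y$ is legitimate).
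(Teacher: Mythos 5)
Your proposal is correct and follows essentially the same route as the paper: decouple the trace over the diagonal ${\bf H}$ into per-row scalar inequalities, each of which is the first-order concavity bound $\sqrt{y}-\sqrt{x} \geq \frac{1}{2\sqrt{y}}(y-x)$ applied to the squared row norms (your $(y-x)^2 \geq 0$ rearrangement is the same inequality verified directly). Your explicit remark that ${\bf H}$ requires every row of ${\bf Y}$ to be nonzero is a point the paper leaves implicit (harmless in context, since the lemma is invoked on $[{\bf Y}-{\bf Z},\mu{\bf 1}]$ with $\mu \neq 0$, whose rows have norm at least $|\mu|$), but it does not change the argument.
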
 

\begin{proof}
Due to the concavity of function $\sqrt{x}$ ($x \geq 0$), we have $\sqrt{y} - \sqrt{x} \geq \frac{1}{2\sqrt{y}}(y - x)$ for all $x \geq 0$ and $y \geq 0$. Therefore, \vspace{-4mm}
\begin{align*}
\|{\bf Y}\|_{2,1} - \|{\bf X}\|_{2,1} & = \sum_{i}^{M} \left[ \sqrt{\|{\bf Y}_{i,:}\|_{2}^{2}} ~ - ~\! \sqrt{\|{\bf X}_{i,:}\|_{2}^{2}} \right] \\
& \geq \sum_{i}^{M} \left[ \frac{1}{2\|{\bf Y}_{i,:}\|_{2}}\left(\|{\bf Y}_{i,:}\|_{2}^{2} - \|{\bf X}_{i,:}\|_{2}^{2}\right) \right] \\
& = \frac{1}{2}\text{trace} \! \left({\bf H}\left({\bf Y}{\bf Y}^{\text{H}} - {\bf X}{\bf X}^{\text{H}}\right)\right),
\end{align*}
where ${\bf H}$ is given by (\ref{eq_lemma}).
\end{proof}      \vspace{2mm}

With Lemmata \ref{lemmaTrace} and \ref{lemmaL21norm}, we have the following theorem.
\begin{theorem}
\label{nonincreasingNoiseless}
The sequence $\{{\bf Z}_{k}\}$ generated by ${\bf Z}_{k+1} = \lambda({\bf P}_{k} + \lambda{\bf Q}_{k})^{-1}{\bf Q}_{k}{\bf Y}$ produces a non-increasing objective function defined in (\ref{prob_noiseless}), i.e., $f({\bf Z}_{k}) \geq f({\bf Z}_{k+1})$ for $k = 0, 1, 2, \cdots$. {\color{black}Moreover, the sequence $\{{\bf Z}_{k}\}$ is bounded, and $\lim_{k \to \infty} \|{\bf Z}_{k} - {\bf Z}_{k+1}\|_{\mathrm{F}} = 0$.}
\end{theorem}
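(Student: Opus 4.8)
The plan is to recognize the IRLS update as the exact minimizer of a convex quadratic surrogate of $f$ anchored at the current iterate, and then to sandwich the true decrease $f({\bf Z}_{k}) - f({\bf Z}_{k+1})$ from below by the surrogate decrease using Lemmata \ref{lemmaTrace} and \ref{lemmaL21norm}. Concretely, for fixed ${\bf P}_{k}$ and ${\bf Q}_{k}$ I would introduce the surrogate
\begin{align*}
g_{k}({\bf Z}) \triangleq \tfrac{1}{2}{\rm trace}({\bf P}_{k}{\bf Z}{\bf Z}^{\rm H}) + \tfrac{\lambda}{2}{\rm trace}\!\left({\bf Q}_{k}({\bf Y} \! - \! {\bf Z})({\bf Y} \! - \! {\bf Z})^{\rm H}\right),
\end{align*}
whose gradient is ${\bf P}_{k}{\bf Z} + \lambda{\bf Q}_{k}({\bf Z} - {\bf Y})$; setting it to zero recovers exactly ${\bf Z}_{k+1} = \lambda({\bf P}_{k} + \lambda{\bf Q}_{k})^{-1}{\bf Q}_{k}{\bf Y}$. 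Since ${\bf P}_{k}, {\bf Q}_{k} \succ 0$ whenever $\mu \neq 0$, the function $g_{k}$ is strictly convex and ${\bf Z}_{k+1}$ is its unique global minimizer.

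For monotonicity, I would first rewrite the two terms of $f$ through the augmented-matrix identities $\|[{\bf Z}, \mu{\bf I}]\|_{*} = {\rm trace}(({\bf Z}{\bf Z}^{\rm H} + \mu^{2}{\bf I})^{1/2})$ and $\|[{\bf Y} - {\bf Z}, \mu{\bf 1}]\|_{2,1} = \sum_{i}\sqrt{\|({\bf Y} - {\bf Z})_{i,:}\|_{2}^{2} + \mu^{2}}$. Applying Lemma \ref{lemmaTrace} with the Gram matrices ${\bf Z}_{k}{\bf Z}_{k}^{\rm H} + \mu^{2}{\bf I}$ and ${\bf Z}_{k+1}{\bf Z}_{k+1}^{\rm H} + \mu^{2}{\bf I}$ (so that the inverse square root is precisely ${\bf P}_{k}$) bounds the nuclear-norm decrease below by $\tfrac{1}{2}{\rm trace}({\bf P}_{k}({\bf Z}_{k}{\bf Z}_{k}^{\rm H} - {\bf Z}_{k+1}{\bf Z}_{k+1}^{\rm H}))$. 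Applying Lemma \ref{lemmaL21norm} to $[{\bf Y} - {\bf Z}_{k}, \mu{\bf 1}]$ and $[{\bf Y} - {\bf Z}_{k+1}, \mu{\bf 1}]$ yields ${\bf H} = {\bf Q}_{k}$, and after the common $\mu^{2}{\bf 1}{\bf 1}^{\rm T}$ contributions cancel it bounds the mixed-norm decrease below by $\tfrac{1}{2}{\rm trace}({\bf Q}_{k}(({\bf Y} - {\bf Z}_{k})({\bf Y} - {\bf Z}_{k})^{\rm H} - ({\bf Y} - {\bf Z}_{k+1})({\bf Y} - {\bf Z}_{k+1})^{\rm H}))$. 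Adding these with weight $\lambda$ gives exactly $f({\bf Z}_{k}) - f({\bf Z}_{k+1}) \geq g_{k}({\bf Z}_{k}) - g_{k}({\bf Z}_{k+1})$, and the right-hand side is nonnegative because ${\bf Z}_{k+1}$ minimizes $g_{k}$.

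For boundedness, I would combine the now-established $f({\bf Z}_{k}) \leq f({\bf Z}_{0})$ with the chain $f({\bf Z}_{k}) \geq \|[{\bf Z}_{k}, \mu{\bf I}]\|_{*} \geq \|{\bf Z}_{k}\|_{*} \geq \|{\bf Z}_{k}\|_{\rm F}$, where the middle step uses operator monotonicity of the trace of the square root on the ordered Gram matrices, and the last uses $\sum_{i}\sigma_{i} \geq (\sum_{i}\sigma_{i}^{2})^{1/2}$. Hence $\|{\bf Z}_{k}\|_{\rm F} \leq f({\bf Z}_{0}) =: R$ for all $k$, so the iterates lie in a fixed ball. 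Finally, for $\|{\bf Z}_{k} - {\bf Z}_{k+1}\|_{\rm F} \to 0$ I would use the quadratic identity valid at a minimizer, $g_{k}({\bf Z}_{k}) - g_{k}({\bf Z}_{k+1}) = \tfrac{1}{2}{\rm trace}(({\bf Z}_{k} - {\bf Z}_{k+1})^{\rm H}({\bf P}_{k} + \lambda{\bf Q}_{k})({\bf Z}_{k} - {\bf Z}_{k+1}))$. The main obstacle is securing a uniform spectral lower bound: boundedness gives $\sigma_{\max}({\bf Z}_{k}) \leq R$, whence $\lambda_{\min}({\bf P}_{k}) \geq (R^{2} + \mu^{2})^{-1/2} =: c > 0$ for every $k$, and since $\lambda{\bf Q}_{k} \succeq 0$ we get $f({\bf Z}_{k}) - f({\bf Z}_{k+1}) \geq \tfrac{c}{2}\|{\bf Z}_{k} - {\bf Z}_{k+1}\|_{\rm F}^{2}$. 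Summing over $k$ and using $f \geq 0$ telescopes to $\tfrac{c}{2}\sum_{k}\|{\bf Z}_{k} - {\bf Z}_{k+1}\|_{\rm F}^{2} \leq f({\bf Z}_{0}) < \infty$, so the summands must vanish, giving the claim. The delicate point throughout is that this uniform lower bound on the weights, and hence the final limit, genuinely requires the boundedness established just beforehand; without it the per-step decrease could degenerate.
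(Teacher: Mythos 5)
Your proposal is correct, and it rests on the same two pillars as the paper's proof---Lemma \ref{lemmaTrace} applied to the shifted Gram matrices so that the weight is exactly ${\bf P}_{k}$, and Lemma \ref{lemmaL21norm} applied to the augmented matrices so that ${\bf H} = {\bf Q}_{k}$ after the $\mu^{2}{\bf 1}{\bf 1}^{\mathrm{T}}$ terms cancel, which is precisely the paper's inequality (\ref{ineq_1})---but you organize the rest differently and more cleanly. Where the paper proves nonnegativity of that lower bound by a page-long trace expansion, discarding two quadratic terms and invoking the KKT identity $({\bf P}_{k} + \lambda{\bf Q}_{k}){\bf Z}_{k+1} = \lambda{\bf Q}_{k}{\bf Y}$ to make the remaining terms cancel, you observe that the lower bound is exactly $g_{k}({\bf Z}_{k}) - g_{k}({\bf Z}_{k+1})$ for the strictly convex quadratic surrogate $g_{k}$ whose unique minimizer is the IRLS update (\ref{irls_progress_noiseless}); nonnegativity is then immediate, and the quadratic identity $g_{k}({\bf Z}_{k}) - g_{k}({\bf Z}_{k+1}) = \tfrac{1}{2}\mathrm{trace}\bigl(({\bf Z}_{k}-{\bf Z}_{k+1})^{\mathrm{H}}({\bf P}_{k}+\lambda{\bf Q}_{k})({\bf Z}_{k}-{\bf Z}_{k+1})\bigr)$ recovers in one line what the paper re-derives separately via the eigenvalue inequality (\ref{eq_eig_trace}). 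More importantly, your proposal is \emph{more} careful than the paper at the one delicate step: the paper simply \emph{defines} $\zeta_{\mathrm{min}} > 0$ as the smallest eigenvalue of ${\bf P}_{k} + \lambda{\bf Q}_{k}$ over all $k$ without justifying that this infimum is positive, whereas you derive the uniform bound $\lambda_{\mathrm{min}}({\bf P}_{k}) \geq (R^{2}+\mu^{2})^{-1/2}$ from the boundedness $\|{\bf Z}_{k}\|_{\mathrm{F}} \leq f({\bf Z}_{0}) =: R$ established first (via $f({\bf Z}_{k}) \geq \|{\bf Z}_{k}\|_{*} \geq \|{\bf Z}_{k}\|_{\mathrm{F}}$, fixing in passing an exponent typo in the paper's $\mathrm{trace}\bigl(({\bf Z}{\bf Z}^{\mathrm{H}})^{-\frac{1}{2}}\bigr)$), and you correctly flag that the telescoping argument collapses without this ordering. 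Two cosmetic caveats: since the matrices are complex, the gradients and the cancellation of cross terms implicitly involve real parts (Wirtinger calculus), an informality you share with the paper; and your final telescoping uses $f \geq 0$, which Theorem \ref{boundedNoiseless} supplies with the sharper bound $|\mu|(\sqrt{M}+\lambda M)$.
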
   \vspace{-6mm}

{\color{black}
\begin{proof}
See Appendix \ref{proof_nonincreasingNoiseless}.
\end{proof}
}

\begin{theorem}
\label{boundedNoiseless}
The objective function $f({\bf Z}) = \|[{\bf Z}, \mu{\bf I}]\|_{*} + \lambda\|[{\bf Y} - {\bf Z}, \mu{\bf 1}]\|_{2,1}$ is bounded below by $|\mu|(\sqrt{M} + \lambda M)$.
\end{theorem}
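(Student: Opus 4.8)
The plan is to bound the two summands of $f({\bf Z})$ separately, exploiting that each term is manifestly nonnegative and that the augmentation by $\mu{\bf I}$ (respectively $\mu{\bf 1}$) forces a strictly positive contribution which I can quantify independently of ${\bf Z}$. Since the bound must hold for every ${\bf Z}$, it suffices to discard the ${\bf Z}$-dependent parts after they have been isolated as nonnegative quantities.

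First I would handle the nuclear-norm term. The key is the standard inequality that the nuclear norm dominates the Frobenius norm, namely $\|{\bf M}\|_{*} = \sum_{i}\sigma_{i}({\bf M}) \geq \sqrt{\sum_{i}\sigma_{i}({\bf M})^{2}} = \|{\bf M}\|_{\text{F}}$. Applying this to ${\bf M} = [{\bf Z}, \mu{\bf I}] \in \mathbb{C}^{M \times (T+M)}$ and using that the Frobenius norm splits across the horizontal concatenation gives $\|[{\bf Z}, \mu{\bf I}]\|_{\text{F}}^{2} = \|{\bf Z}\|_{\text{F}}^{2} + \|\mu{\bf I}\|_{\text{F}}^{2} = \|{\bf Z}\|_{\text{F}}^{2} + M\mu^{2}$. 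Dropping the nonnegative summand $\|{\bf Z}\|_{\text{F}}^{2}$ then yields $\|[{\bf Z}, \mu{\bf I}]\|_{*} \geq \sqrt{M\mu^{2}} = \sqrt{M}\,|\mu|$.

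Next I would bound the $\ell_{2,1}$ term directly from its definition as a sum of row norms. The $i$th row of $[{\bf Y} - {\bf Z}, \mu{\bf 1}] \in \mathbb{C}^{M \times (T+1)}$ is $[({\bf Y} - {\bf Z})_{i,:}, \mu]$, whose $\ell_{2}$ norm equals $\sqrt{\|({\bf Y} - {\bf Z})_{i,:}\|_{2}^{2} + \mu^{2}} \geq |\mu|$. Summing over the $M$ rows gives $\|[{\bf Y} - {\bf Z}, \mu{\bf 1}]\|_{2,1} \geq M|\mu|$. Combining the two estimates produces $f({\bf Z}) \geq \sqrt{M}|\mu| + \lambda M|\mu| = |\mu|(\sqrt{M} + \lambda M)$, as claimed.

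There is no serious obstacle; the only point requiring care is the choice of inequality for the first term. Lower-bounding each singular value of $[{\bf Z}, \mu{\bf I}]$ by $|\mu|$ (its singular values are $\sqrt{\sigma_{i}({\bf Z})^{2} + \mu^{2}}$) would give the larger constant $M|\mu|$, not the stated $\sqrt{M}|\mu|$; to match the theorem one instead routes through the Frobenius norm, where the identity block contributes exactly $M\mu^{2}$ under a single square root. I would also verify the dimensions of the concatenated matrices so that the row count entering both bounds is indeed $M$, and note in passing that the resulting bound is valid (the claim asks only for a lower bound) though not tight.
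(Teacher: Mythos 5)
Your proof is correct and takes essentially the same route as the paper: your ``nuclear norm dominates Frobenius norm'' step is exactly the paper's trace inequality $\mathrm{trace}\!\left({\bf X}^{\frac{1}{2}}\right) \geq \left(\mathrm{trace}\!\left({\bf X}\right)\right)^{\frac{1}{2}}$ applied to ${\bf X} = {\bf Z}{\bf Z}^{\mathrm{H}} + \mu^{2}{\bf I}$ (the eigenvalues of which are the squared singular values of $[{\bf Z}, \mu{\bf I}]$, so $\mathrm{trace}({\bf X}) = \|{\bf Z}\|_{\mathrm{F}}^{2} + M\mu^{2}$ is the same Frobenius computation), and your row-wise bound on the $\ell_{2,1}$ term is verbatim the paper's. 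Your closing observation is also sound: bounding each singular value $\sqrt{\sigma_{i}^{2}({\bf Z}) + \mu^{2}}$ by $|\mu|$ would yield the stronger constant $M|\mu|$ for the first term, so the stated bound holds but is not tight.
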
 \vspace{-6mm}

\begin{proof}
See Appendix \ref{proof_boundedNoiseless}.
\end{proof}


{\color{black}
\begin{theorem}
\label{KKTpointNoiseless}
Any limit point of the sequence $\{{\bf Z}_{k}\}$ generated by (\ref{irls_progress_noiseless}) is a stationary point of Problem (\ref{prob_noiseless}), and moreover, the stationary point is globally optimal.
\end{theorem}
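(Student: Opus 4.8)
The plan is to combine the three facts already established in Theorems~\ref{nonincreasingNoiseless} and~\ref{boundedNoiseless} --- that $\{{\bf Z}_{k}\}$ is bounded, that $\lim_{k \to \infty}\|{\bf Z}_{k} - {\bf Z}_{k+1}\|_{\mathrm{F}} = 0$, and that $f$ is differentiable --- to pass to the limit in the IRLS recursion, and then to invoke convexity for the optimality claim. First I would observe that boundedness of $\{{\bf Z}_{k}\}$ guarantees, by the Bolzano--Weierstrass theorem, the existence of at least one limit point ${\bf Z}^{*}$, attained along some subsequence ${\bf Z}_{k_{j}} \to {\bf Z}^{*}$. Because $\|{\bf Z}_{k} - {\bf Z}_{k+1}\|_{\mathrm{F}} \to 0$, the shifted subsequence converges to the same point, ${\bf Z}_{k_{j}+1} \to {\bf Z}^{*}$.

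The key step is to read the iteration (\ref{irls_progress_noiseless}) as a first-order optimality equation. Rather than working with the matrix inverse, I would rewrite ${\bf Z}_{k+1} = \lambda({\bf P}_{k} + \lambda{\bf Q}_{k})^{-1}{\bf Q}_{k}{\bf Y}$ in the cleared form $({\bf P}_{k} + \lambda{\bf Q}_{k}){\bf Z}_{k+1} = \lambda{\bf Q}_{k}{\bf Y}$, i.e.\ ${\bf P}_{k}{\bf Z}_{k+1} + \lambda{\bf Q}_{k}({\bf Z}_{k+1} - {\bf Y}) = {\bf 0}$. The crucial observation is that both ${\bf P}_{k} = ({\bf Z}_{k}{\bf Z}_{k}^{\mathrm{H}} + \mu^{2}{\bf I})^{-\frac{1}{2}}$ and the diagonal matrix ${\bf Q}_{k}$ in (\ref{Q_noiseless}) are \emph{continuous} functions of ${\bf Z}_{k}$ whenever $\mu \neq 0$: the additive $\mu^{2}{\bf I}$ keeps ${\bf Z}_{k}{\bf Z}_{k}^{\mathrm{H}} + \mu^{2}{\bf I}$ uniformly positive definite, and the $\mu^{2}$ inside each denominator of ${\bf Q}_{k}$ bounds it away from zero, so no singularity can arise. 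Hence ${\bf P}_{k_{j}} \to {\bf P}^{*} \triangleq ({\bf Z}^{*}{\bf Z}^{*\mathrm{H}} + \mu^{2}{\bf I})^{-\frac{1}{2}}$ and ${\bf Q}_{k_{j}} \to {\bf Q}^{*}$, the analogous matrix built from ${\bf Z}^{*}$. Passing to the limit along $k_{j}$ in the cleared iteration --- and using ${\bf Z}_{k_{j}+1} \to {\bf Z}^{*}$ --- yields ${\bf P}^{*}{\bf Z}^{*} + \lambda{\bf Q}^{*}({\bf Z}^{*} - {\bf Y}) = {\bf 0}$, which is exactly the condition $\partial f({\bf Z}^{*})/\partial {\bf Z} = {\bf 0}$ derived before (\ref{irls_progress_noiseless}). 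Thus every limit point is a stationary point of Problem~(\ref{prob_noiseless}).

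For the global optimality claim I would then argue convexity. The objective $f({\bf Z}) = \|[{\bf Z}, \mu{\bf I}]\|_{*} + \lambda\|[{\bf Y} - {\bf Z}, \mu{\bf 1}]\|_{2,1}$ is a nonnegative combination (with $\lambda > 0$) of two norms --- the nuclear norm and the $\ell_{2,1}$ mixed-norm --- each evaluated at an affine function of ${\bf Z}$, namely ${\bf Z} \mapsto [{\bf Z}, \mu{\bf I}]$ and ${\bf Z} \mapsto [{\bf Y} - {\bf Z}, \mu{\bf 1}]$. Since every norm is convex and convexity is preserved under affine precomposition and nonnegative summation, $f$ is convex; it is moreover differentiable everywhere because $\mu \neq 0$. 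For a differentiable convex function a vanishing gradient is sufficient for a global minimum, so the stationary point ${\bf Z}^{*}$ found above is globally optimal, which completes the argument.

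The main obstacle I anticipate is the limit-passing step, where one must be sure that the map ${\bf Z}_{k} \mapsto ({\bf P}_{k}, {\bf Q}_{k})$ is genuinely continuous and that no denominator degenerates along the subsequence. Working with the cleared form $({\bf P}_{k} + \lambda{\bf Q}_{k}){\bf Z}_{k+1} = \lambda{\bf Q}_{k}{\bf Y}$ rather than the inverse is what dissolves this difficulty: it reduces the requirement to continuity of ${\bf P}_{k}$ and ${\bf Q}_{k}$ alone --- both guaranteed by the smoothing parameter $\mu \neq 0$ --- and it makes transparent the role of $\|{\bf Z}_{k} - {\bf Z}_{k+1}\|_{\mathrm{F}} \to 0$, which is precisely what lets me replace ${\bf Z}_{k_{j}+1}$ by ${\bf Z}^{*}$ when taking the limit.
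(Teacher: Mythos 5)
Your proposal is correct and follows essentially the same route as the paper's proof: read the iteration in the cleared form $({\bf P}+\lambda{\bf Q}){\bf Z}_{k+1}=\lambda{\bf Q}{\bf Y}$, use $\|{\bf Z}_{k}-{\bf Z}_{k+1}\|_{\mathrm{F}}\to 0$ to pass to the limit and recover the KKT condition ${\bf P}^{*}{\bf Z}^{*}+\lambda{\bf Q}^{*}({\bf Z}^{*}-{\bf Y})={\bf 0}$, then invoke convexity of $f$ for global optimality. In fact your write-up is more careful than the paper's terse version, since you make explicit the subsequence extraction via Bolzano--Weierstrass and the continuity of ${\bf Z}\mapsto({\bf P},{\bf Q})$ guaranteed by $\mu\neq 0$, both of which the paper leaves implicit.
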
 \vspace{-6mm}

\begin{proof}
See Appendix \ref{proof_KKTpointNoiseless}.
\end{proof}
}

\subsection{NOISY CASE}
\label{noisycase}
In the noisy case, the data model is as (\ref{datamodel_ZV}), and the problem to be solved is given as 
\begin{align}
\label{problem_nuclearl1}
\min_{{\bf Z}, {\bf V}} ~ \frac{1}{2}\|{\bf Y} \! - \! {\bf Z} \! - \! {\bf V}\|_{\text{F}}^{2} + \lambda_{1}\|{\bf Z}\|_{*} + \lambda_{2}\|{\bf V}\|_{2,1},
\end{align}
where $\lambda_{1}$ and $\lambda_{2}$ are two tuning parameters. Different from the noiseless case, we have to optimize the problem with two variables, i.e., ${\bf Z}$ and ${\bf V}$. To proceed, we also introduce a smoothing parameter $\mu$ into the nuclear norm and the $\ell_{2,1}$ mixed-norm in Problem (\ref{problem_nuclearl1}). Therefore, the problem to be addressed is transferred to
\begin{align}
\label{prob_f_ZV}
\min_{{\bf Z}, {\bf V}} ~ f({\bf Z} , {\bf V}),
\end{align}
where the objective function is defined as $f({\bf Z} , {\bf V}) \triangleq \frac{1}{2}\|{\bf Y} \! - \! {\bf Z} \! - \! {\bf V}\|_{\text{F}}^{2} + \lambda_{1}\|[{\bf Z}, \mu{\bf I}]\|_{*} + \lambda_{2}\|[{\bf V}, \mu{\bf 1}]\|_{2,1}$. The derivatives of $f({\bf Z}, {\bf V})$ w.r.t. ${\bf Z}$ and ${\bf V}$ are 
\begin{align*}
\frac{\partial f({\bf Z} , {\bf V})}{\partial {\bf Z}} & = (-{\bf Y} + {\bf Z} + {\bf V}) + \lambda_{1}{\bf P}{\bf Z}, \\
\frac{\partial f({\bf Z} , {\bf V})}{\partial {\bf V}} & = (-{\bf Y} + {\bf Z} + {\bf V}) + \lambda_{2}{\bf Q}{\bf V},
\end{align*}
respectively, where ${\bf P}$ is defined the same as that in the noiseless case, and
\begin{align}
\label{Q_noisy}
{\bf Q} \triangleq \left[ \!\!
\begin{array}{ccc}
\frac{1}{\sqrt{\|{\bf V}_{1,:}\|_{2}^{2} + \mu^{2}}} & & \\
 & \!\!\!\! \ddots \!\!\!\! & \\
 & & \frac{1}{\sqrt{\|{\bf V}_{M,:}\|_{2}^{2} + \mu^{2}}}
\end{array}  \!\! \right] \! .
\end{align}
Note that ${\bf Q}$ given in (\ref{Q_noiseless}) is exactly the same as the one in (\ref{Q_noisy}) since ${\bf V} = {\bf Y} \! - \! {\bf Z}$ in the noiseless case.

According to the KKT condition, we have
\begin{align*}
\left\{ \!\!
\begin{array}{r}
({\bf I} + \lambda_{1}{\bf P}){\bf Z} - {\bf Y} + {\bf V} = {\bf 0} \\
({\bf I} + \lambda_{2}{\bf Q}){\bf V} - {\bf Y} + {\bf Z} = {\bf 0}
\end{array} \right.
\end{align*}
which leads to the IRLS procedure as
\begin{align}
\label{irls_progress_noisy}
\left\{ \!\!
\begin{array}{l}
{\bf Z}_{k+1} = ({\bf I} + \lambda_{1}{\bf P}_{k})^{-1}({\bf Y} - {\bf V}_{k}) \\
{\bf V}_{k+1} \! = ({\bf I} + \lambda_{2}{\bf Q}_{k})^{-1}({\bf Y} - {\bf Z}_{k+1}),
\end{array}
\right.
\end{align}
where ${\bf P}_{k}$ and ${\bf Q}_{k}$ are dependent on ${\bf Z}_{k}$ and ${\bf V}_{k}$, respectively. The IRLS algorithm for the noisy case is summarized in Algorithm \ref{IRLS_noisy}.

\begin{algorithm}[t]
	\caption{IRLS algorithm for noisy case}
	\label{IRLS_noisy}
	\hspace*{\algorithmicindent} \textbf{Input~~~~\!:} ${\bf Y} \in \mathbb{C}^{M \times T}$, $\lambda_{1}$, $\lambda_{2}$, $\mu$, $\epsilon$, $k_{\text{max}}$ \\
	\hspace*{\algorithmicindent} \textbf{Output~~\!:} ${\bf{\widehat Z}} \in \mathbb{C}^{M \times T}$, ${\bf{\widehat V}} \in \mathbb{C}^{M \times T}$ \\
	\hspace*{\algorithmicindent} \textbf{Initialize:} ${\bf Z}_{0} \gets {\bf Z}_{\text{init}}$, ${\bf V}_{0} \gets {\bf V}_{\text{init}}$, $k \gets 0$
	\begin{algorithmic}[1]
		\While {not converged}
		\State $k \gets k+1$
		\State calculate ${\bf P}_{k}$ and ${\bf Q}_{k}$
		\State update ${\bf Z}_{k}$ using ${\bf Z} = ({\bf I} + \lambda_{1}{\bf P})^{-1}({\bf Y} - {\bf V})$
		\State update ${\bf V}_{k}$ using ${\bf V} = ({\bf I} + \lambda_{2}{\bf Q})^{-1}({\bf Y} - {\bf Z})$
		\State converged $\gets$ $\left\{ \!\! \begin{array}{l}
		k \geq k_{\text{max}} ~ \text{or} \\
		{\color{black} \frac{ | f({\bf Z}_{k} , \! {\bf V}_{k}) - f({\bf Z}_{k-1} , \! {\bf V}_{k-1}) | }{ | f({\bf Z}_{k} , \! {\bf V}_{k}) | } \leq \epsilon }
		\end{array} \right. $
		\EndWhile \State \textbf{end while} \\
		${\bf{\widehat Z}} \gets {\bf Z}_{k}$, ${\bf{\widehat V}} \gets {\bf V}_{k}$
	\end{algorithmic}
\end{algorithm}

\subsection{CONVERGENCE ANALYSIS FOR ALGORITHM \ref{IRLS_noisy}}
\label{convergence_noisy}
In this part, the monotonicity and boundedness of the objective function $f({\bf Z}, {\bf V})$ in (\ref{prob_f_ZV}) are proved in Theorems \ref{nonincreasingNoisy} and \ref{boundedNoisy}, respectively.

\begin{theorem}
\label{nonincreasingNoisy}
The sequence $\{({\bf Z}_{k} , {\bf V}_{k})\}$ generated by (\ref{irls_progress_noisy}) produces a non-increasing objective function defined in (\ref{prob_f_ZV}), i.e., $f({\bf Z}_{k} , {\bf V}_{k}) \geq f({\bf Z}_{k+1} , {\bf V}_{k+1})$ for $k = 0, 1, 2, \cdots$. {\color{black}Moreover, the sequence $\{({\bf Z}_{k} , {\bf V}_{k})\}$ is bounded, and $\lim_{k \to \infty}\|{\bf Z}_{k} - {\bf Z}_{k+1}\|_{\mathrm{F}} = 0$ and $\lim_{k \to \infty}\|{\bf V}_{k} - {\bf V}_{k+1}\|_{\mathrm{F}} = 0$.}
\end{theorem}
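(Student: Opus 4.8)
The plan is to read the two-block update (\ref{irls_progress_noisy}) as a majorization--minimization scheme and to establish the decrease one half-step at a time. First I would observe that, with ${\bf V}$ frozen at ${\bf V}_k$, the ${\bf Z}$-update is the exact minimizer of the quadratic surrogate
\[
\hat f_{\bf Z}({\bf Z}) \triangleq \tfrac{1}{2}\|{\bf Y} - {\bf Z} - {\bf V}_k\|_{\mathrm{F}}^{2} + \tfrac{\lambda_1}{2}\mathrm{trace}\!\left({\bf P}_k {\bf Z}{\bf Z}^{\mathrm{H}}\right),
\]
because equating its derivative to zero reproduces exactly ${\bf Z}_{k+1} = ({\bf I} + \lambda_1 {\bf P}_k)^{-1}({\bf Y} - {\bf V}_k)$; symmetrically, with ${\bf Z}$ frozen at ${\bf Z}_{k+1}$, the ${\bf V}$-update minimizes $\hat f_{\bf V}({\bf V}) \triangleq \tfrac{1}{2}\|{\bf Y} - {\bf Z}_{k+1} - {\bf V}\|_{\mathrm{F}}^{2} + \tfrac{\lambda_2}{2}\mathrm{trace}({\bf Q}_k {\bf V}{\bf V}^{\mathrm{H}})$. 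Since ${\bf P}_k, {\bf Q}_k \succ {\bf 0}$, both surrogates have Hessian $\succeq {\bf I}$ and are therefore $1$-strongly convex---a fact I would reserve for the very end.

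Next I would link the surrogate decrease to the true decrease. Applying the trace inequality (Lemma \ref{lemmaTrace}) with ${\bf Y} \gets {\bf Z}_k{\bf Z}_k^{\mathrm{H}} + \mu^2{\bf I}$ and ${\bf X} \gets {\bf Z}_{k+1}{\bf Z}_{k+1}^{\mathrm{H}} + \mu^2{\bf I}$, and using $\|[{\bf Z},\mu{\bf I}]\|_{*} = \mathrm{trace}(({\bf Z}{\bf Z}^{\mathrm{H}}+\mu^2{\bf I})^{1/2})$ with ${\bf P}_k = ({\bf Z}_k{\bf Z}_k^{\mathrm{H}}+\mu^2{\bf I})^{-1/2}$, yields $\lambda_1(\|[{\bf Z}_k,\mu{\bf I}]\|_{*} - \|[{\bf Z}_{k+1},\mu{\bf I}]\|_{*}) \geq \tfrac{\lambda_1}{2}\mathrm{trace}({\bf P}_k({\bf Z}_k{\bf Z}_k^{\mathrm{H}} - {\bf Z}_{k+1}{\bf Z}_{k+1}^{\mathrm{H}}))$. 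Feeding the augmented matrices $[{\bf V}_k,\mu{\bf 1}]$ and $[{\bf V}_{k+1},\mu{\bf 1}]$ into the $\ell_{2,1}$ inequality (Lemma \ref{lemmaL21norm}) gives the analogous bound with weight ${\bf H} = {\bf Q}_k$, the $\mu^2$ on each row canceling in the row-norm difference so that only ${\bf V}{\bf V}^{\mathrm{H}}$ remains. Adding each of these two majorizations to the optimality inequality of the matching surrogate, i.e. $\hat f_{\bf Z}({\bf Z}_k)\geq \hat f_{\bf Z}({\bf Z}_{k+1})$ and $\hat f_{\bf V}({\bf V}_k)\geq \hat f_{\bf V}({\bf V}_{k+1})$, cancels the weighted-trace terms and delivers $f({\bf Z}_k,{\bf V}_k)\geq f({\bf Z}_{k+1},{\bf V}_k)\geq f({\bf Z}_{k+1},{\bf V}_{k+1})$, which is the monotonicity claim.

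For the remaining assertions I would first note that every term of $f$ is nonnegative and that $\|[{\bf Z},\mu{\bf I}]\|_{*}\geq\|[{\bf Z},\mu{\bf I}]\|_{\mathrm{F}}\geq|\mu|\sqrt{M}$ and $\|[{\bf V},\mu{\bf 1}]\|_{2,1}\geq M|\mu|$, so $f\geq|\mu|(\lambda_1\sqrt{M}+\lambda_2 M)$---the noisy analogue of Theorem \ref{boundedNoiseless}. A non-increasing sequence that is bounded below converges, and since all iterates lie in the sublevel set $\{f\leq f({\bf Z}_0,{\bf V}_0)\}$, the inequalities $\|{\bf Z}_k\|_{\mathrm{F}}\leq\|{\bf Z}_k\|_{*}\leq\|[{\bf Z}_k,\mu{\bf I}]\|_{*}$ and $\|{\bf V}_k\|_{\mathrm{F}}\leq\|{\bf V}_k\|_{2,1}\leq\|[{\bf V}_k,\mu{\bf 1}]\|_{2,1}$ bound $\{({\bf Z}_k,{\bf V}_k)\}$. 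Finally, I would sharpen the two half-step bounds using $1$-strong convexity, which gives $\hat f_{\bf Z}({\bf Z}_k)-\hat f_{\bf Z}({\bf Z}_{k+1})\geq\tfrac{1}{2}\|{\bf Z}_{k+1}-{\bf Z}_k\|_{\mathrm{F}}^{2}$ and its ${\bf V}$-counterpart, so that $f({\bf Z}_k,{\bf V}_k)-f({\bf Z}_{k+1},{\bf V}_{k+1})\geq\tfrac{1}{2}(\|{\bf Z}_{k+1}-{\bf Z}_k\|_{\mathrm{F}}^{2}+\|{\bf V}_{k+1}-{\bf V}_k\|_{\mathrm{F}}^{2})$; telescoping and invoking the lower bound on $f$ makes $\sum_k(\|{\bf Z}_{k+1}-{\bf Z}_k\|_{\mathrm{F}}^{2}+\|{\bf V}_{k+1}-{\bf V}_k\|_{\mathrm{F}}^{2})$ summable, which forces both successive differences to zero. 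The main obstacle I anticipate is the bookkeeping of the $\mu$-augmentation: one must verify that Lemmas \ref{lemmaTrace} and \ref{lemmaL21norm} reproduce exactly the weights ${\bf P}_k$ and ${\bf Q}_k$ used by the algorithm and that the surrogate cross-terms cancel cleanly; once those two majorizations are correctly aligned with the two exact subproblem solves, the rest is telescoping.
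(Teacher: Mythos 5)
Your proposal is correct, and it reaches the theorem by a genuinely different route than the paper. The paper proves monotonicity in one monolithic computation: it expands $f({\bf Z}_{k},{\bf V}_{k})-f({\bf Z}_{k+1},{\bf V}_{k+1})$, invokes Lemmata \ref{lemmaTrace} and \ref{lemmaL21norm} exactly as you do, but then uses the KKT identities of (\ref{irls_progress_noisy}) to collapse the cross-terms, ending in the single aggregate bound $\frac{1}{2}\|{\bf Z}_{k}-{\bf Z}_{k+1}-{\bf V}_{k}+{\bf V}_{k+1}\|_{\mathrm{F}}^{2}\geq 0$; since that one squared norm cannot separate the two increments, the paper must make a second pass through the same inequality chain, this time invoking the eigenvalue trace inequality $\mathrm{trace}({\bf X}{\bf Y})\geq\sum_{i}\zeta_{i}({\bf X})\zeta_{M-i+1}({\bf Y})$ and uniform lower bounds $\zeta_{\mathrm{min}}^{({\bf P})},\zeta_{\mathrm{min}}^{({\bf Q})}>0$ on the smallest eigenvalues of ${\bf P}_{k},{\bf Q}_{k}$ over all $k$, in order to telescope $\|{\bf Z}_{k}-{\bf Z}_{k+1}\|_{\mathrm{F}}$ and $\|{\bf V}_{k}-{\bf V}_{k+1}\|_{\mathrm{F}}$ separately. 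Your block-MM framing is tidier on precisely this point: because the data-fidelity term contributes the identity block to each surrogate Hessian, both half-steps are $1$-strongly convex minimizations, and the single estimate $f({\bf Z}_{k},{\bf V}_{k})-f({\bf Z}_{k+1},{\bf V}_{k+1})\geq\frac{1}{2}\left(\|{\bf Z}_{k+1}-{\bf Z}_{k}\|_{\mathrm{F}}^{2}+\|{\bf V}_{k+1}-{\bf V}_{k}\|_{\mathrm{F}}^{2}\right)$ delivers monotonicity, the finer intermediate chain $f({\bf Z}_{k},{\bf V}_{k})\geq f({\bf Z}_{k+1},{\bf V}_{k})\geq f({\bf Z}_{k+1},{\bf V}_{k+1})$, and both vanishing-difference limits in one telescoping step, with the absolute constant $\frac{1}{2}$ in place of iterate-dependent eigenvalue constants. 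This is a real advantage: the paper's $\zeta_{\mathrm{min}}^{({\bf P})}>0$ ``over all $k$'' implicitly relies on the boundedness of $\{{\bf Z}_{k}\}$ (since $\zeta_{\mathrm{min}}({\bf P}_{k})=(\sigma_{\max}^{2}({\bf Z}_{k})+\mu^{2})^{-1/2}$), a dependency your argument never needs. Your verification of the $\mu$-bookkeeping is also sound: the augmented rows give ${\bf H}={\bf Q}_{k}$ exactly, and the $\mu^{2}{\bf 1}{\bf 1}^{\mathrm{H}}$ terms cancel in the difference, so the surrogates align with the updates in (\ref{irls_progress_noisy}). The boundedness portion of your proof (sublevel set of $f$ plus $\|{\bf Z}_{k}\|_{\mathrm{F}}\leq\|[{\bf Z}_{k},\mu{\bf I}]\|_{*}$ and $\|{\bf V}_{k}\|_{\mathrm{F}}\leq\|[{\bf V}_{k},\mu{\bf 1}]\|_{2,1}$) coincides with the paper's, up to your using nonnegativity of each term where the paper uses $\min\{\lambda_{1},\lambda_{2}\}$.
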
    \vspace{-6mm}

{\color{black}
\begin{proof}
See Appendix \ref{proof_nonincreasingNoisy}.
\end{proof}
}

\begin{theorem}
\label{boundedNoisy}
The objective function $f({\bf Z} , \! {\bf V}) \! = \! \frac{1}{2}\|{\bf Y} \! - \! {\bf Z} \! - \! {\bf V}\|_{\rm{F}}^{2} \! + \! \lambda_{1}\|[{\bf Z}, \mu{\bf I}]\|_{*} \! + \! \lambda_{2}\|[{\bf V} , \mu{\bf 1}]\|_{2,1}$ is bounded below by $|\mu|(\lambda_{1}\sqrt{M} + \lambda_{2}M)$.
\end{theorem}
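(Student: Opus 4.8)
The plan is to split $f(\mathbf{Z},\mathbf{V})$ into its three additive terms and bound each from below independently, since a sum of lower bounds is a lower bound for the sum. The first term, $\frac{1}{2}\|\mathbf{Y}-\mathbf{Z}-\mathbf{V}\|_{\mathrm{F}}^{2}$, is a squared Frobenius norm and hence nonnegative, contributing $0$ to the bound. This is precisely why the final estimate carries no dependence on $\mathbf{Y}$ and why the data-fitting residual is irrelevant to the argument.

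For the nuclear-norm term I would avoid computing singular values explicitly and instead use the elementary inequality $\|\mathbf{X}\|_{*}\ge\|\mathbf{X}\|_{\mathrm{F}}$, which holds because $\sum_{i}\sigma_{i}\ge(\sum_{i}\sigma_{i}^{2})^{1/2}$. Applying this to $\mathbf{X}=[\mathbf{Z},\mu\mathbf{I}]$ and using that the Frobenius norm of a horizontal concatenation splits blockwise, I get $\|[\mathbf{Z},\mu\mathbf{I}]\|_{*}\ge\sqrt{\|\mathbf{Z}\|_{\mathrm{F}}^{2}+\mu^{2}M}\ge|\mu|\sqrt{M}$, where $\|\mu\mathbf{I}\|_{\mathrm{F}}^{2}=\mu^{2}M$ since $\mathbf{I}$ is $M\times M$. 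Dropping the nonnegative $\|\mathbf{Z}\|_{\mathrm{F}}^{2}$ yields the claimed $|\mu|\sqrt{M}$.

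For the $\ell_{2,1}$ term I would work row by row. By definition $\|[\mathbf{V},\mu\mathbf{1}]\|_{2,1}=\sum_{i=1}^{M}\|[\mathbf{V}_{i,:},\mu]\|_{2}=\sum_{i=1}^{M}\sqrt{\|\mathbf{V}_{i,:}\|_{2}^{2}+\mu^{2}}$, and each summand is at least $|\mu|$, so the whole sum is at least $M|\mu|$. Multiplying the three contributions by their coefficients $1$, $\lambda_{1}$, $\lambda_{2}$ and adding gives $f(\mathbf{Z},\mathbf{V})\ge|\mu|(\lambda_{1}\sqrt{M}+\lambda_{2}M)$, as asserted.

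The argument is a collection of elementary norm inequalities, so there is no genuinely hard step; the only thing to get right is the choice of bound for the nuclear norm. A more direct route computes the singular values of $[\mathbf{Z},\mu\mathbf{I}]$ exactly as $\sqrt{\sigma_{i}(\mathbf{Z})^{2}+\mu^{2}}$ and produces the stronger bound $M|\mu|\ge|\mu|\sqrt{M}$; the slightly weaker Frobenius-norm route is cleaner and already delivers the $\sqrt{M}$ appearing in the statement, matching the coefficient $\lambda_{1}\sqrt{M}$ rather than $\lambda_{1}M$.
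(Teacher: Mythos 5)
Your proposal is correct and follows essentially the same route as the paper: the paper also bounds the three terms separately, drops the nonnegative residual $\frac{1}{2}\|{\bf Y}-{\bf Z}-{\bf V}\|_{\mathrm{F}}^{2}$, bounds each row norm $\sqrt{\|{\bf V}_{i,:}\|_{2}^{2}+\mu^{2}}\geq|\mu|$ to get $|\mu|M$, and derives $\|[{\bf Z},\mu{\bf I}]\|_{*}=\mathrm{trace}\bigl(({\bf Z}{\bf Z}^{\mathrm{H}}+\mu^{2}{\bf I})^{1/2}\bigr)\geq\bigl(\mathrm{trace}({\bf Z}{\bf Z}^{\mathrm{H}})+M\mu^{2}\bigr)^{1/2}\geq|\mu|\sqrt{M}$ via $\sum_{i}\sqrt{\zeta_{i}}\geq\sqrt{\sum_{i}\zeta_{i}}$, which is exactly your $\|\cdot\|_{*}\geq\|\cdot\|_{\mathrm{F}}$ inequality in trace form. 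Your closing remark that the exact singular values $\sqrt{\sigma_{i}({\bf Z})^{2}+\mu^{2}}$ would yield the sharper constant $\lambda_{1}M|\mu|$ is also correct, though, as you note, the weaker $\sqrt{M}$ bound is all the theorem asserts.
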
   \vspace{-6mm}

\begin{proof}
See Appendix \ref{proof_boundedNoisy}.
\end{proof}

{\color{black}
\begin{theorem}
\label{KKTpointNoisy}
Any limit point of the sequence $\{({\bf Z}_{k}, {\bf V}_{k})\}$ generated by (\ref{irls_progress_noisy}) is a stationary point of Problem (\ref{prob_f_ZV}), and moreover, the stationary point is globally optimal.
\end{theorem}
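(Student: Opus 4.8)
The plan is to combine the asymptotic properties already established in Theorem~\ref{nonincreasingNoisy} with a limiting argument applied to the update equations~(\ref{irls_progress_noisy}), and then to upgrade stationarity to global optimality by exploiting convexity of $f$. First I would invoke boundedness of $\{({\bf Z}_{k}, {\bf V}_{k})\}$ (Theorem~\ref{nonincreasingNoisy}) together with the Bolzano--Weierstrass theorem to guarantee that a limit point exists. Fix any limit point $({\bf Z}^{\star}, {\bf V}^{\star})$ and a subsequence $({\bf Z}_{k_{j}}, {\bf V}_{k_{j}}) \to ({\bf Z}^{\star}, {\bf V}^{\star})$. Because Theorem~\ref{nonincreasingNoisy} also supplies $\lim_{k \to \infty}\|{\bf Z}_{k} - {\bf Z}_{k+1}\|_{\mathrm{F}} = 0$ and $\lim_{k \to \infty}\|{\bf V}_{k} - {\bf V}_{k+1}\|_{\mathrm{F}} = 0$, the shifted subsequence satisfies $({\bf Z}_{k_{j}+1}, {\bf V}_{k_{j}+1}) \to ({\bf Z}^{\star}, {\bf V}^{\star})$ as well; aligning the index-$k$ and index-$(k+1)$ iterates at the \emph{same} limit is precisely what allows the subsequent limit exchange to close.

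Second I would pass to the limit in~(\ref{irls_progress_noisy}) written in residual form, namely $({\bf I} + \lambda_{1}{\bf P}_{k}){\bf Z}_{k+1} - {\bf Y} + {\bf V}_{k} = {\bf 0}$ and $({\bf I} + \lambda_{2}{\bf Q}_{k}){\bf V}_{k+1} - {\bf Y} + {\bf Z}_{k+1} = {\bf 0}$, obtained by left-multiplying the two update rules by $({\bf I} + \lambda_{1}{\bf P}_{k})$ and $({\bf I} + \lambda_{2}{\bf Q}_{k})$, respectively. The crucial fact is that, for $\mu \neq 0$, the weighting matrices depend continuously on the iterates: the map ${\bf Z} \mapsto {\bf P} = ({\bf Z}{\bf Z}^{\mathrm{H}} + \mu^{2}{\bf I})^{-\frac{1}{2}}$ is continuous because ${\bf Z}{\bf Z}^{\mathrm{H}} + \mu^{2}{\bf I}$ has spectrum contained in $[\mu^{2}, \infty)$, so the inverse-square-root matrix function is well defined and continuous there, and likewise each diagonal entry of ${\bf Q}$ in~(\ref{Q_noisy}) is a continuous function of ${\bf V}$. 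Hence ${\bf P}_{k_{j}} \to {\bf P}^{\star} = ({\bf Z}^{\star}{\bf Z}^{\star\mathrm{H}} + \mu^{2}{\bf I})^{-\frac{1}{2}}$ and ${\bf Q}_{k_{j}} \to {\bf Q}^{\star}$. Letting $j \to \infty$ yields $({\bf I} + \lambda_{1}{\bf P}^{\star}){\bf Z}^{\star} - {\bf Y} + {\bf V}^{\star} = {\bf 0}$ and $({\bf I} + \lambda_{2}{\bf Q}^{\star}){\bf V}^{\star} - {\bf Y} + {\bf Z}^{\star} = {\bf 0}$, which are exactly $\frac{\partial f}{\partial {\bf Z}}({\bf Z}^{\star}, {\bf V}^{\star}) = {\bf 0}$ and $\frac{\partial f}{\partial {\bf V}}({\bf Z}^{\star}, {\bf V}^{\star}) = {\bf 0}$; thus $({\bf Z}^{\star}, {\bf V}^{\star})$ is a stationary point of Problem~(\ref{prob_f_ZV}).

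Finally, for global optimality I would argue that $f({\bf Z}, {\bf V})$ is jointly convex: the quadratic term $\frac{1}{2}\|{\bf Y} - {\bf Z} - {\bf V}\|_{\mathrm{F}}^{2}$ is convex, while $\|[{\bf Z}, \mu{\bf I}]\|_{*}$ and $\|[{\bf V}, \mu{\bf 1}]\|_{2,1}$ are compositions of the convex nuclear and $\ell_{2,1}$ norms with affine maps ${\bf Z} \mapsto [{\bf Z}, \mu{\bf I}]$ and ${\bf V} \mapsto [{\bf V}, \mu{\bf 1}]$, hence convex. Because $f$ is moreover differentiable everywhere whenever $\mu \neq 0$, the first-order optimality condition $\nabla f = {\bf 0}$ is both necessary and sufficient, so the stationary point $({\bf Z}^{\star}, {\bf V}^{\star})$ is a global minimizer, giving the second assertion.

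The step I expect to be the main obstacle is the continuity/limit-exchange argument of the second paragraph: one must verify that the $\mu$-regularization keeps the spectrum of ${\bf Z}_{k}{\bf Z}_{k}^{\mathrm{H}} + \mu^{2}{\bf I}$ bounded away from $0$ and the denominators in~(\ref{Q_noisy}) bounded away from $0$, so that $\nabla f$ is jointly continuous in the iterates and the limit of the gradients coincides with the gradient evaluated at the limit point. Once this is in place, the passage to the limit and the convexity-based optimality conclusion are routine.
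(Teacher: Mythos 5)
Your proposal is correct and follows essentially the same route as the paper's proof: use the vanishing successive differences from Theorem~\ref{nonincreasingNoisy} to pass to the limit in the update equations~(\ref{irls_progress_noisy}), identify the resulting fixed-point equations with the KKT/stationarity conditions of Problem~(\ref{prob_f_ZV}), and invoke convexity for global optimality. The only difference is that you make explicit the subsequence extraction (via boundedness and Bolzano--Weierstrass) and the continuity of ${\bf P}$ and ${\bf Q}$ guaranteed by the $\mu$-regularization --- details the paper's terse proof leaves implicit.
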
 \vspace{-6mm}

\begin{proof}
See Appendix \ref{proof_KKTpointNoisy}.
\end{proof}
}

\begin{remark}
The differences between our work and \cite{Lu2015(b)} are stated as follows.
\begin{itemize}
\item The problem formulation in \cite{Lu2015(b)} is column-sparse, while we have row-sparsity of ${\bf V}$. This leads to differences in matrix multiplication and matrix derivative.
\item \cite{Lu2015(b)} considers the noiseless case only, while we consider both noiseless and noisy cases. 
\item To update ${\bf Z}$ using matrices ${\bf P}$ and ${\bf Q}$, the approach in \cite{Lu2015(b)} involves a Sylvester equation and utilizes the Matlab command \texttt{lyap}. However, our method admits a closed-form formula, see (\ref{irls_progress_noiseless}) and (\ref{irls_progress_noisy}).
\item The proofs of convergence are not exactly the same. \cite{Lu2015(b)} proves the monotonicity of the objective and the boundedness of the sequence $\{{\bf Z}_{k}\}$. We prove the monotonicity and the boundedness of the objective in both noiseless and noisy cases.
\end{itemize}
\end{remark}

\subsection{DOA ESTIMATION AND DISTORTED SENSOR DETECTION}
\label{estimationanddetection}
Once ${\widehat{\bf Z}}$ and $\widehat{\bf V}$ are resolved, they can be adopted to estimate the DOAs and detect the distorted sensors, respectively. Note that ${\bf Z} = {\bf A}{\bf S}$ can be viewed as a noise-free data model. DOAs can be found via subspace-based methods, such as MUSIC, whose spatial spectrum is
\begin{align*}
P(\theta) = \frac{1}{{\bf a}^{\text{H}}(\theta)({\bf I} - {\bf L}{\bf L}^{\text{H}}){\bf a}(\theta)}.
\end{align*}
The SVD of $\widehat{\bf Z}$ is $\widehat{\bf Z} = {\bf L}{\bm \Sigma}{\bf R}^{\text{H}}$, where the columns of ${\bf L}$ and ${\bf R}$ contain the left and right orthogonal base vectors of $\widehat{\bf Z}$, respectively, and ${\bm \Sigma}$ is a diagonal matrix whose diagonal elements are the singular values of $\widehat{\bf Z}$ arranged in descending order. Under the assumption that the number of sources, i.e., $K$, is known, the DOAs are determined by searching for the $K$ largest peaks of $P(\theta)$.

On the other hand, the number of distorted sensors and their positions can be determined by $\|\widehat{\bf V}_{i,:}\|_{2}$, $i = 1, 2, \cdots, M$. Algorithm \ref{detec_failsensor} shows a strategy for detecting the distorted sensors. In words, we first calculate the $\ell_{2}$ norm of each row of $\widehat{\bf V}$ and form a vector, say ${\bf v}$, and then we sort these $\ell_{2}$ norms in ascending order and obtain $\tilde{\bf v}$. We define the difference of the first two entries of $\tilde{\bf v}$ as $d = \tilde{\bf v}(2) - \tilde{\bf v}(1)$. Next, for $i = 3, 4, \cdots, M$, we compute $\tilde{\bf v}(i) - \tilde{\bf v}(i - 1)$ and compare it with a threshold, say $h$, of large value: if it is larger than or equal to $h$, we set $i_{\text{fail}} = i$ and break the for loop; if it is less than $h$, we have $i_{\text{fail}} = M + 1$. Finally, the number of distorted sensors is obtained as $M_{\text{fail}} = M - i_{\text{fail}} +1$.

\begin{algorithm}[t]
	\caption{Detection of distorted sensors}
	\label{detec_failsensor}
	\hspace*{\algorithmicindent} \textbf{Input~~:} $\widehat{\bf V} \in \mathbb{C}^{M \times T}$, $h$  \\
	 \hspace*{\algorithmicindent} \textbf{Output:} $M_{\text{fail}}$ \\
	\hspace*{\algorithmicindent} calculate ${\bf v} = [\|\widehat{\bf V}_{1,:}\|_{2}, \|\widehat{\bf V}_{2,:}\|_{2}, \cdots, \|\widehat{\bf V}_{M,:}\|_{2} ]^{\text{T}}$ \\
	\hspace*{\algorithmicindent} calculate $\tilde{\bf v} = \text{sort}({\bf v} , \text{`ascend'})$ \\
	\hspace*{\algorithmicindent} calculate $d = \tilde{\bf v}(2) - \tilde{\bf v}(1)$ and assign $i_{\text{fail}} = M + 1$
	\begin{algorithmic}[1]
		\For {$i = 3, 4, \cdots, M$}
		\If {$\tilde{\bf v}(i) - \tilde{\bf v}(i - 1) \geq h$}
		\State $i_{\text{fail}} = i$  and break the \textbf{for} loop
		\EndIf \State \textbf{end if}
		\EndFor \State \textbf{end for} \\
		$M_{\text{fail}} \gets M \! - \! i_{\text{fail}} +1$
	\end{algorithmic}
\end{algorithm}

\section{SIMULATIONS}
\label{simulation}

\subsection{PARAMETER SELECTION}
\label{parameterselection}
In this subsection, we discuss the problem of choosing appropriate values for $\mu$, $\lambda_{1}$, and $\lambda_{2}$ in Problem (\ref{prob_f_ZV}) used in Algorithm \ref{IRLS_noisy}. We set $\epsilon = 10^{-16}$, $k_{\text{max}} = 1000$, and ${\bf Z}_{\text{init}} = {\bf V}_{\text{init}} = {\bf O}$ in Algorithm \ref{IRLS_noisy}, where ${\bf O}$ denotes the $M \times T$ all-zeros matrix. We define the root-mean squared error (RMSE) of DOA estimates as:
\begin{align*}
\text{RMSE} = \sqrt{\frac{1}{QK}\sum_{q=1}^{Q}\sum_{k=1}^{K} (\hat{\theta}_{k,q} - {\theta}_{k})^{2} },
\end{align*}
where $\hat{\theta}_{k,q}$ is the estimate of the $k$th signal in the $q$th Monte Carlo trial, and $Q$ is the total number of Monte Carlo trials. The RMSE is used as a metric to select appropriate values for $\mu$, $\lambda_{1}$, and $\lambda_{2}$. The plots in this subsection are averaged over $Q = 1000$ trials.

Consider a uniform linear array (ULA) of $M = 10$ sensors, $4$ of which at random positions are distorted by gain and phase errors, receiving $K = 2$ signals with DOAs ${\bm \theta} = [-10^{\circ}, 10^{\circ}]^{\text{T}}$. The sensor gain and phase errors are randomly generated by drawing from uniform distributions on $[0, 10]$ and $[-15^{\circ}, 15^{\circ}]$, respectively. In the first example, we test $6$ scenarios with different signal-to-noise ratios (SNRs) and different numbers of snapshots. In Fig. \ref{RMSEvsMU}, we fix $\lambda_{1} = 2$ and $\lambda_{2} = 0.2$, and plot RMSE versus $\mu$. In the second example, we examine RMSE versus the tuning parameters $\lambda_{1}$ and $\lambda_{2}$ with $\mu = 0.01$, SNR = $0$ dB, and $T = 100$ snapshots. The result is drawn in Fig. \ref{RMSEvsLambdas}.

We observe from Fig. \ref{RMSEvsMU} that the RMSE remains unchanged and stays minimal when $\mu$ lies within the interval $[10^{-13} , 10^{0}]$ for all $6$ tested scenarios. Hence, we can choose any value for $\mu$ within this interval. Since the interval covers such a large range, the IRLS algorithm is insensitive to the smoothing parameter $\mu$. Note that in Fig. \ref{RMSEvsLambdas}, our goal is to find a pair of $(\lambda_{1}, \lambda_{2})$ such that the RMSE is minimized. This demonstrates that there are many pairs of $(\lambda_{1}, \lambda_{2})$ meeting such a condition, such as $(\lambda_{1}, \lambda_{2}) = (2 ~\! , 0.2)$, which is used for Algorithm \ref{IRLS_noisy} in the following simulations.

\begin{figure}[t]
	\centering
	\subfigure{\includegraphics[width=0.5\textwidth]{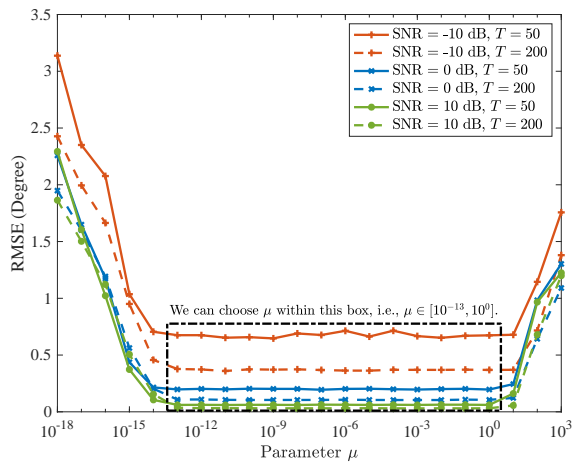}} \vspace{-4mm}
	\caption{RMSE versus $\mu$, with $M = 10$ sensors (4 of which fail), $K = 2$ sources, $\lambda_{1} = 2$, and $\lambda_{2} = 0.2$.}
	\label{RMSEvsMU}
\end{figure} 

\begin{figure}[t]
	\centering
	\subfigure{\includegraphics[width=0.5\textwidth]{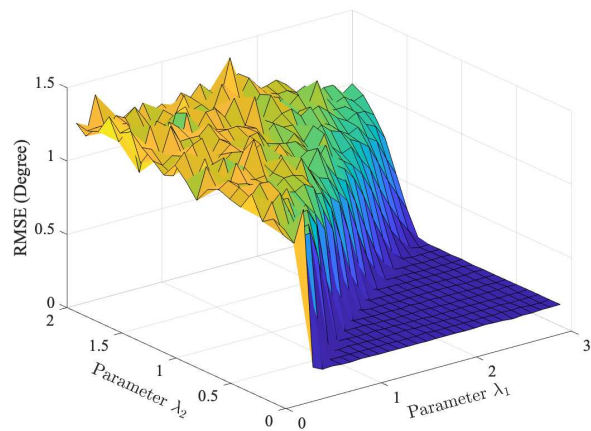}} \vspace{-4mm}
	\caption{RMSE versus $\lambda_{1}$ and $\lambda_2$, with $M = 10$ sensors (4 of which fail), $K = 2$ sources, $T = 100$ snapshots, $\text{SNR} = 0 ~ \text{dB}$, and $\mu = 0.01$.}
	\label{RMSEvsLambdas}
\end{figure}

\subsection{CONVERGENCE SPEED}
\label{convergencespeed}
We compare the convergence speed of the IRLS with several existing methods, i.e., SVT, APG, and ADMM. Considering again a ULA of $M = 10$ sensors, $4$ of which at random positions are distorted, receives $K = 2$ signals from $-10^{\circ}$ and $10^{\circ}$. The objective function values of the algorithms versus the number of iterations are depicted in Fig. \ref{ObjecFunvsNumofIter} with SNR $ = 0$ dB and $T = 100$ snapshots. We see that the IRLS algorithm converges fastest in the sense that its objective function value decreases most rapidly, and it requires the least number of iterations to terminate, compared with the other three competitors.

The objective function value, CPU time and number of iterations are tabulated in Table \ref{table_objecfun_time_iter} (upper) for SNR $ = 0$ dB and $T = 100$ snapshots, and Table \ref{table_objecfun_time_iter} (lower) for SNR $ = 0$ dB and $T = 500$ snapshots. In both settings, the IRLS algorithm has the smallest objective function value, the least CPU time, and the least number of iterations, among all the examined algorithms.

\begin{figure}[t]
	\centering
	\subfigure{\includegraphics[width=0.5\textwidth]{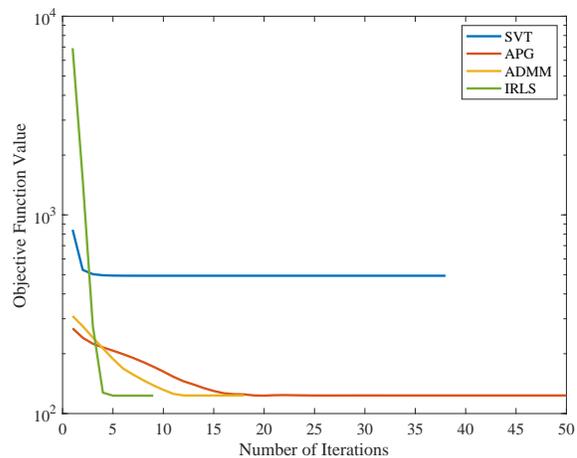}} \vspace{-4mm}
	\caption{Objective function value versus number of iterations at $\text{SNR} = 0 ~ \text{dB}$ and $T = 100$ snapshots.}
	\label{ObjecFunvsNumofIter}
\end{figure}

\begin{table}[t]
\centering
\caption{Comparison of objective function value, CPU time and number of iterations in two different settings.}
\label{table_objecfun_time_iter}
   \begin{tabular}{c|c|c|c}
    \multicolumn{4}{c}{$T = 100$ snapshots, $\text{SNR} = 0 ~ \text{dB}$} \\ \hline
    Algorithm & $f({\bf Z} , {\bf V})$ & Time (sec) & No. Iter. \\ \hline
    {\color{black} SVT \cite{Cai2010}} & 493.8653 & 0.1205 & {\color{black}{\textbf{5}}} \\ \hline
    {\color{black}APG \cite{Beck2009}} & 123.3227 & 0.8403 & {\color{black}22} \\ \hline
    {\color{black}ADMM \cite{Lin2013} }& 123.3227 & 0.1029 & {\color{black}13} \\ \hline
    IRLS & {\textbf{123.0227}} & {\textbf{0.0890}} & {\color{black}{\textbf{5}}}
   \end{tabular} \\ \vspace{2mm}
   \begin{tabular}{c|c|c|c}
    \multicolumn{4}{c}{$T = 500$ snapshots, $\text{SNR} = 0 ~ \text{dB}$} \\ \hline
    Algorithm & $f({\bf Z} , {\bf V})$ & Time (sec) & No. Iter. \\ \hline
    {\color{black}SVT \cite{Cai2010}} & \!\!\!1965.5994 & 0.6166 & {\color{black}25} \\ \hline
    {\color{black}APG \cite{Beck2009}} & 282.5776 & 4.3472 & {\color{black}32} \\ \hline
    {\color{black}ADMM \cite{Lin2013}} & 282.5776 & 5.8497 & {\color{black}78} \\ \hline
    IRLS & {\textbf{282.2776}} & {\textbf{0.1063}} & {\color{black}{\textbf{10}}}
   \end{tabular}
\end{table}

\subsection{COMPUTATIONAL COMPLEXITY}
\label{complexity}
We compare the computational complexity in this subsection. Note that the SVT, APG, and ADMM algorithms require one SVD of an $M \! \times \! T$ matrix per iteration, and the SVD consumes the most CPU time. As for the IRLS algorithm, the main calculation is to find the inverse of an $M \! \times \! M$ matrix per iteration. Their main computational cost is summarized in Table \ref{table_cost}, where $K_{\text{svt}}$, $K_{\text{apg}}$, $K_{\text{admm}}$, and $K_{\text{irls}}$ denote the numbers of iterations for the SVT, APG, ADMM, and IRLS algorithms, respectively.

In Fig. \ref{costvssnapshot}, we plot the averaged CPU time against the number of snapshots at $M = 10$ sensors (4 of which distorted), $K = 2$ sources, SNR $= 0$ dB, and $Q = 1000$ Monte Carlo runs. It is seen that the CPU times of the SVT, APG, and ADMM\footnote{Note that there is a jump of ADMM at $T = 250$. This is caused by the rapid increment of its number of iterations $K_{\text{admm}}$.} algorithms are nearly linearly increasing with the number of snapshots. This is consistent with the theoretical analysis in Table \ref{table_cost}. Fig. \ref{costvssensor} displays the CPU time versus the number of sensors with $T = 100$ snapshots and the other parameters are the same as those in Fig. \ref{costvssnapshot}. We see that the curves of the CPU time of the SVT, APG, and ADMM algorithms are approximately linearly correlated to the number of sensors in a log scale, which again matches the theoretical calculations in Table \ref{table_cost}.

\begin{figure}[t]
	\centering
	\subfigure{\includegraphics[width=0.5\textwidth]{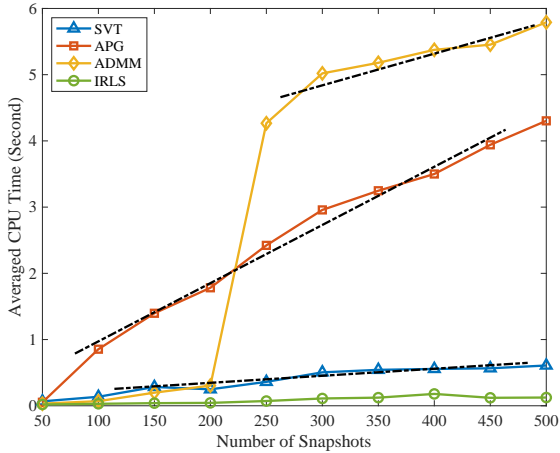}} \vspace{-4mm}   
	\caption{Computational complexity versus number of snapshots.}
	\label{costvssnapshot}
\end{figure}

\begin{figure}[t]
	\centering
	\subfigure{\includegraphics[width=0.5\textwidth]{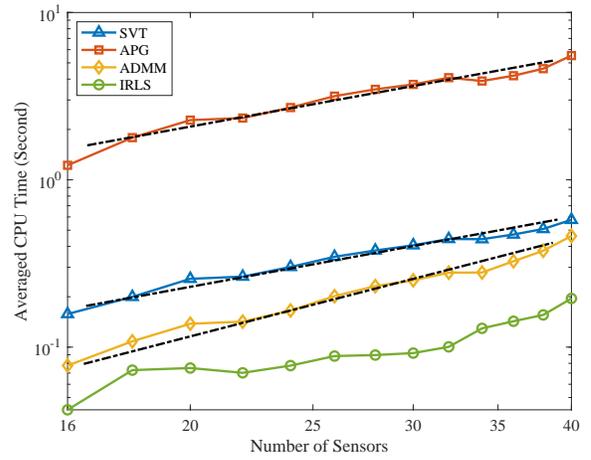}} \vspace{-4mm}   
	\caption{Computational complexity versus number of sensors.}
	\label{costvssensor}
\end{figure}

\begin{table}[t]
\centering
\caption{Computational complexity.} 
\label{table_cost}
	\begin{tabular}{c|c}
	Algorithm & Complexity \\ \hline
	{\color{black}SVT \cite{Cai2010}} & $K_{\text{svt}}\mathcal{O}(T\!M^{2})$ \\ \hline
	{\color{black}APG \cite{Beck2009}} & $K_{\text{apg}}\mathcal{O}(T\!M^{2})$ \\ \hline
	{\color{black}ADMM \cite{Lin2013}} & $K_{\text{admm}}\mathcal{O}(T\!M^{2})$ \\ \hline
	IRLS & $K_{\text{irls}}\mathcal{O}(M^{3})$
	\end{tabular}
\end{table}

\subsection{DOA ESTIMATION PERFORMANCE}
\label{doaperformance}
We use the RMSE and resolution probability as DOA estimation performance measures. The resolution probability is defined as
\begin{align*}
\text{ResProb} = {N_{\text{succ}}}/{Q},
\end{align*}
where as in the previous examples $Q$ is the number of Monte Carlo runs, and $N_{\text{succ}}$ denotes the number of trials where all the DOAs are successfully estimated. The trial is counted as a successful one if the following inequality is satisfied: $\max_{k}\{ |\hat{\theta}_{k} - \theta_{k}| \} \leq 0.5^{\circ}$.

In the first example, we consider a ULA of $M = 10$ sensors, $3$ of which at random positions are distorted, $K = 2$ signals from $-10^{\circ}$ and $10^{\circ}$, $T = 100$ snapshots, and $Q = 5000$ Monte Carlo trials. The RMSE and resolution probability are depicted in Figs. \ref{RMSEvsSNR} and \ref{ResProbvsSNR}, respectively. {\color{black}The traditional Cramér–Rao bound (CRB) with known sensor errors \cite{Delmas2014} is plotted as a benchmark.} Note that the curve labelled as ``MUSIC-Known'' denotes the MUSIC method with exact knowledge of the distorted sensors. It is seen that the SVT and MUSIC have bad performance even when the SNR becomes large. The APG, ADMM, and IRLS algorithms perform well when the SNR increases, their RMSEs decrease and their resolution probabilities increase up to $1$. The IRLS algorithm outperforms the other two state-of-the-art methods, i.e., APG and ADMM.

\begin{figure}[t]
	\centering
	\subfigure{\includegraphics[width=0.5\textwidth]{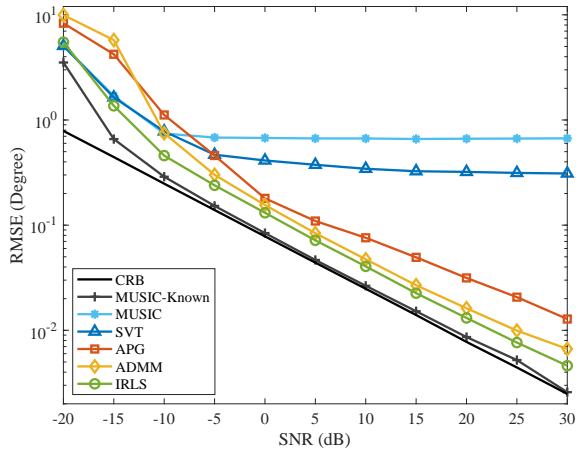}}  \vspace{-4mm}
	\caption{RMSE versus SNR.}
	\label{RMSEvsSNR}
\end{figure}

\begin{figure}[t]
	\centering
	\subfigure{\includegraphics[width=0.5\textwidth]{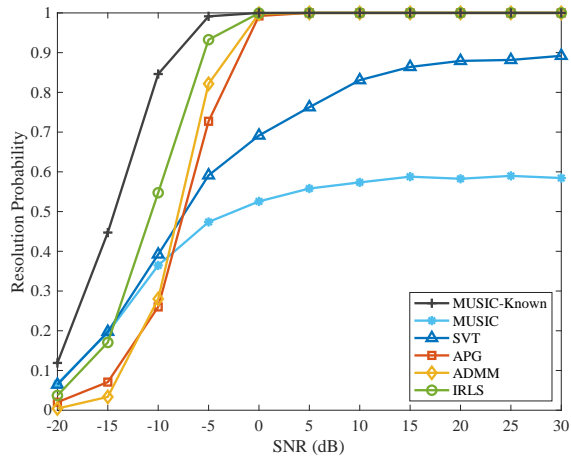}} \vspace{-4mm}
	\caption{Resolution probability versus SNR.}
	\label{ResProbvsSNR}
\end{figure}

In the next example, we examine the DOA estimation performance for different {\color{black}numbers of snapshots}. The SNR is set to be $0$ dB, and the remaining parameters are the same as those of the former example. The RMSE and resolution probability of the methods are plotted in Figs. \ref{RMSEvsSnapshot} and \ref{ResProbvsSnapshot}, respectively. The results demonstrate a better performance of the IRLS algorithm compared with the SVT, APG, and ADMM methods.

\begin{figure}[t]
	\centering
	\subfigure{\includegraphics[width=0.5\textwidth]{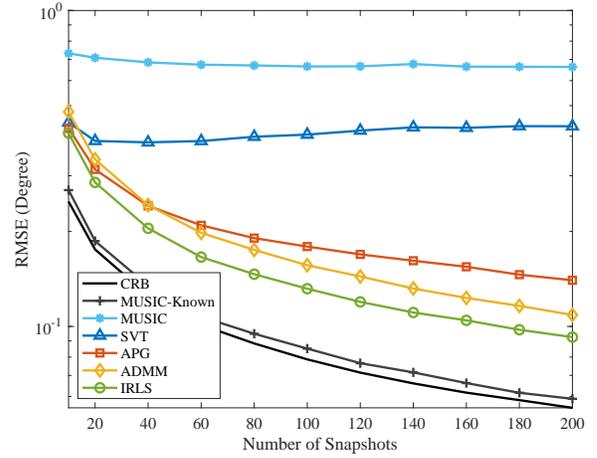}}  \vspace{-4mm}
	\caption{RMSE versus number of snapshots.}
	\label{RMSEvsSnapshot}
\end{figure}

\begin{figure}[t]
	\centering
	\subfigure{\includegraphics[width=0.5\textwidth]{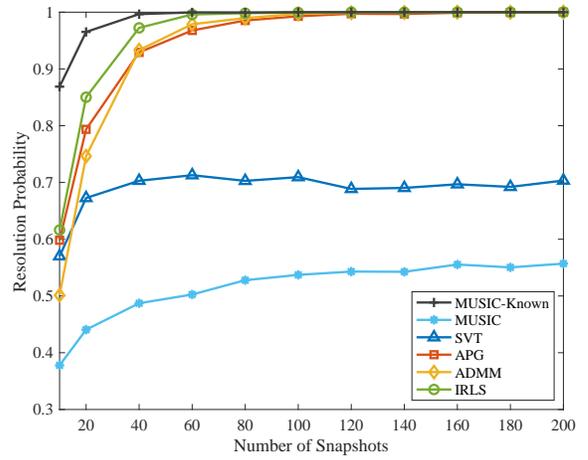}} \vspace{-4mm}
	\caption{Resolution probability versus number of snapshots.}
	\label{ResProbvsSnapshot}
\end{figure}

In the last example of this subsection, we evaluate the DOA estimation performance in view of the source separation angle. The settings of SNR $= 0$ dB, $K = 2$ sources, and $T = 100$ snapshots are employed. The first signal is from $0^{\circ}$, while the DOA of the second signal changes from $1^{\circ}$ to $20^{\circ}$ with a stepsize of $1^{\circ}$. The other parameters are unchanged as those in the first example of this subsection. The RMSE and resolution probability versus angular separation are displayed in Figs. \ref{RMSEvsAngSepa} and \ref{ResProbvsAngSepa}, respectively. These again indicate that the IRLS algorithm outperforms the SVT, APG, and ADMM algorithms in terms of RMSE and resolution probability.

\begin{figure}[t]
	\centering
	\subfigure{\includegraphics[width=0.5\textwidth]{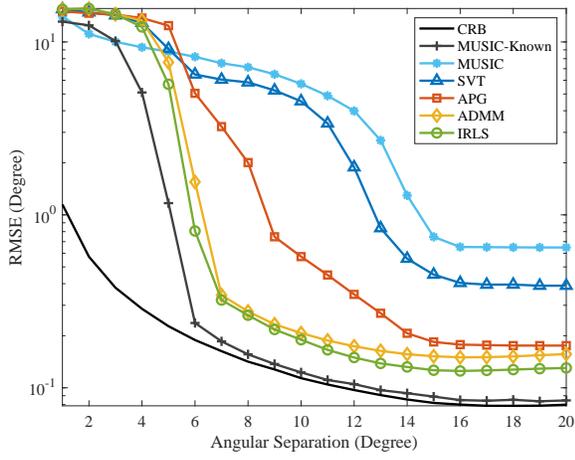}}  \vspace{-4mm}
	\caption{RMSE versus source separation angle.}
	\label{RMSEvsAngSepa}
\end{figure}

\begin{figure}[t]
	\centering
	\subfigure{\includegraphics[width=0.5\textwidth]{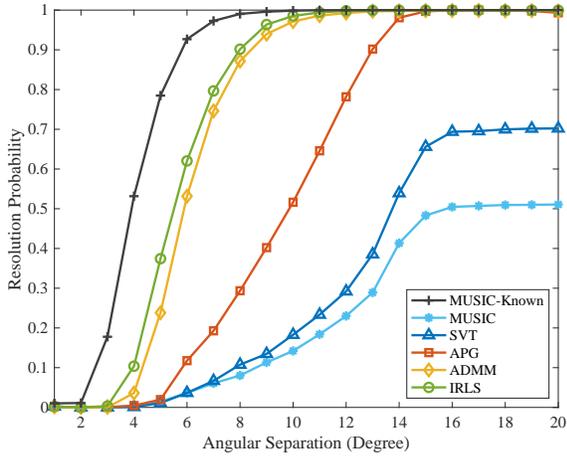}} \vspace{-4mm}
	\caption{Resolution probability versus source separation angle.}
	\label{ResProbvsAngSepa}
\end{figure}

\subsection{DISTORTED SENSOR DETECTION PERFORMANCE}
\label{detectionperformance}
Parallel to the three examples in Section \ref{doaperformance}, we now examine the performance of the detection of distorted sensors of the SVT, APG, ADMM, and IRLS algorithms. The threshold in Algorithm \ref{detec_failsensor} is set as $h = 10d$. We utilize the success detection rate as a metric, which is defined as 
\begin{align*}
\text{DetecRate} = {N_{\text{detec}}}/{Q}.
\end{align*}
$N_{\text{detec}}$ denotes the number of trials where the number of distorted sensors is correctly estimated, and meanwhile their positions are exactly found. The results are given in Figs. \ref{RatevsSNR}, \ref{RatevsSnapshot}, and \ref{RatevsAngSepa}, which show that the ADMM is the best amongst all tested methods in terms of identifying the distorted sensors, followed by the IRLS algorithm.

\begin{figure}[t]
	\centering
	\subfigure{\includegraphics[width=0.5\textwidth]{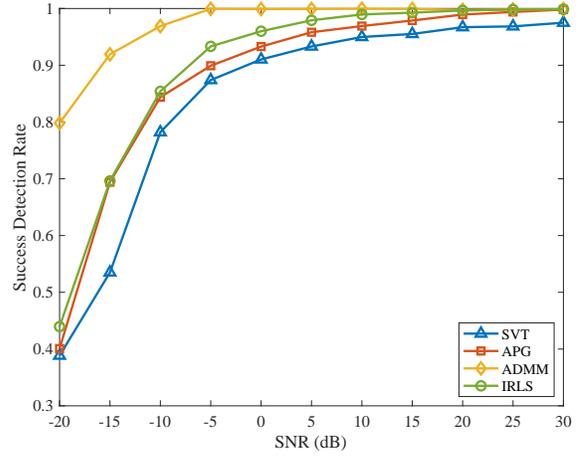}} \vspace{-4mm}
	\caption{Success detection rate versus SNR.}
	\label{RatevsSNR}
\end{figure}

\begin{figure}[t]
	\centering
	\subfigure{\includegraphics[width=0.5\textwidth]{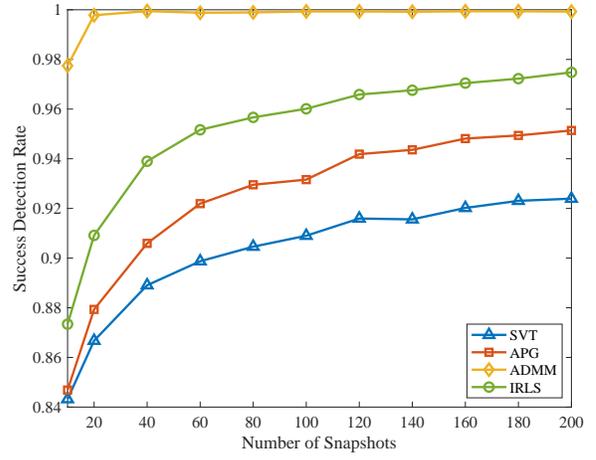}} \vspace{-4mm}
	\caption{Success detection rate versus number of snapshots.}
	\label{RatevsSnapshot}
\end{figure}

\begin{figure}[t]
	\centering
	\subfigure{\includegraphics[width=0.5\textwidth]{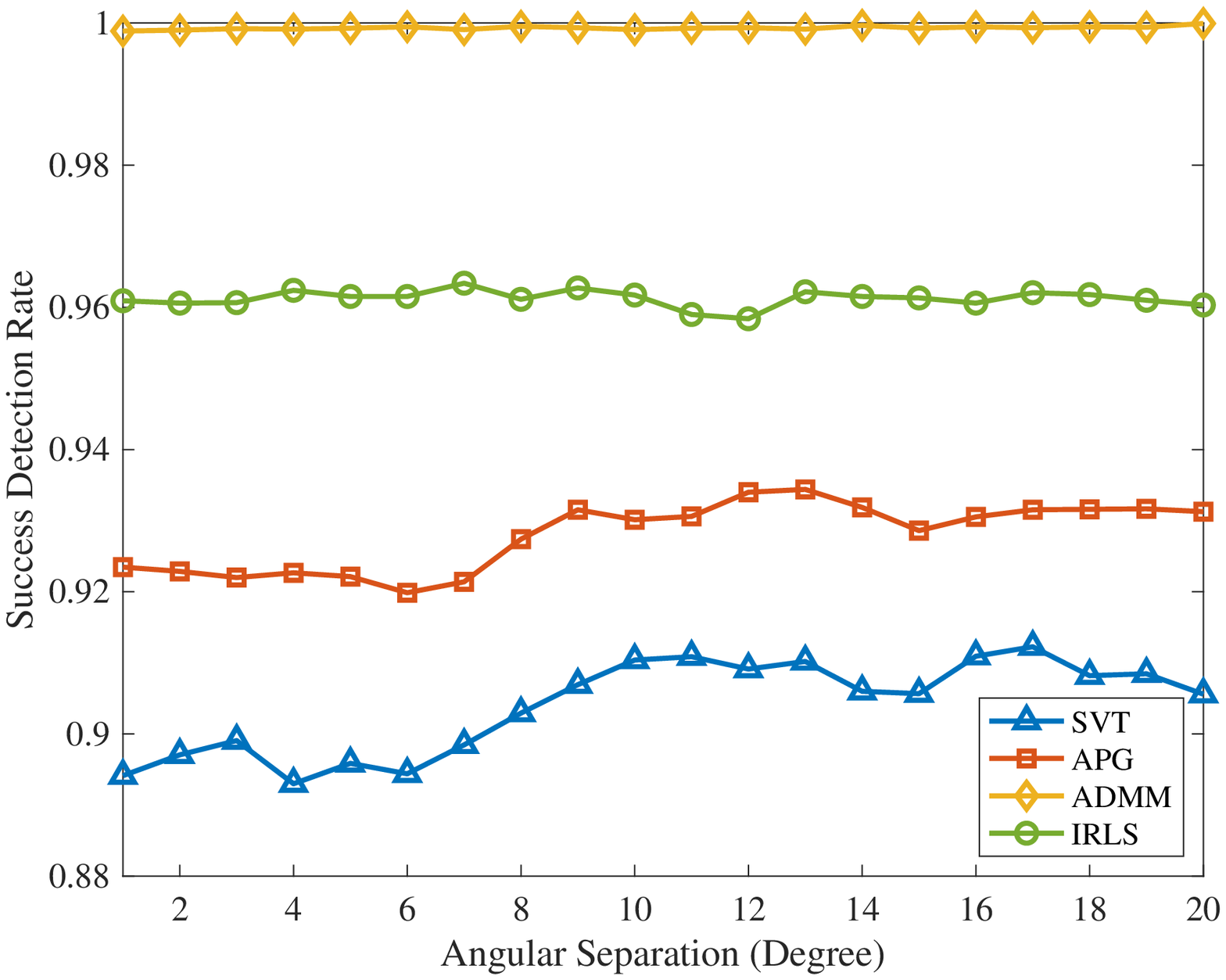}}  \vspace{-4mm}
	\caption{Success detection rate versus source separation angle.}
	\label{RatevsAngSepa}
\end{figure}

\section{CONCLUSION}
\label{conclusion}

We studied the problem of simultaneously estimating direction-of-arrival (DOA) of signals and detecting distorted sensors. It is assumed that the distorted sensors occur randomly, and the number of distorted sensors is much smaller than the total number of sensors. The problem was formulated via low-rank and row-sparse decomposition, and solved by iteratively reweighted least squares (IRLS). Both noiseless and noisy cases were considered. Theoretical analyses of algorithm convergence were provided. Computational cost of the IRLS algorithm was compared with that of several existing methods. Simulation results were conducted for parameter selection, convergence speed, computational time, and performance of DOA estimation as well as distorted sensor detection. The IRLS method was demonstrated to have higher DOA estimate accuracy and lower computational cost than other methods, and the alternating direction method of multipliers was shown to be slightly better than the IRLS algorithm in distorted sensor detection.

\section*{APPENDIX}

\subsection{Proof of Theorem \ref{nonincreasingNoiseless}}
\label{proof_nonincreasingNoiseless}
We calculate the difference between the objective function values in two successive iterations as 
\begingroup
\allowdisplaybreaks
\begin{align}
& f({\bf Z}_{k}) - f({\bf Z}_{k+1}) \nonumber \\
= ~ & \|[{\bf Z}_{k}, \mu{\bf I}]\|_{*} - \|[{\bf Z}_{k+1}, \mu{\bf I}]\|_{*} \nonumber \\ 
& + \lambda \left( \|[{\bf Y} - {\bf Z}_{k}, \mu{\bf 1}]\|_{2,1} - \|[{\bf Y} - {\bf Z}_{k+1}, \mu{\bf 1}]\|_{2,1} \right) \nonumber \\
= ~ & \text{trace} \! \left( \! \left({\bf Z}_{k}{\bf Z}_{k}^{\text{H}} + \mu^{2}{\bf I}\right)^{\frac{1}{2}} \! \right) - \text{trace} \! \left( \! \left({\bf Z}_{k+1}{\bf Z}_{k+1}^{\text{H}} + \mu^{2}{\bf I}\right)^{\frac{1}{2}} \! \right) \nonumber \\
& + \lambda \left( \|[{\bf Y} - {\bf Z}_{k}, \mu{\bf 1}]\|_{2,1} - \|[{\bf Y} - {\bf Z}_{k+1}, \mu{\bf 1}]\|_{2,1} \right) \nonumber \\
\label{ineq_1}
\geq ~ & \text{trace} \! \left( \frac{1}{2}\left({\bf Z}_{k}{\bf Z}_{k}^{\text{H}} - {\bf Z}_{k+1}{\bf Z}_{k+1}^{\text{H}}\right){\bf P}_{k} \! \right) \nonumber \\
& + \frac{\lambda}{2} \text{trace} \! \left( {\bf Q}_{k} \left[ \left({\bf Y} \! - \! {\bf Z}_{k}\right)\left({\bf Y} \! - \! {\bf Z}_{k}\right)^{\text{H}}  \right. \right. \nonumber \\
& \qquad \qquad \quad  \left. \left. -  \left({\bf Y} \! - \! {\bf Z}_{k+1}\right)\left({\bf Y} \! - \! {\bf Z}_{k+1}\right)^{\text{H}} \right] \right) \\
= ~ & \text{trace} \! \left( \frac{1}{2}({\bf Z}_{k} - {\bf Z}_{k+1})({\bf Z}_{k} - {\bf Z}_{k+1})^{\text{H}}{\bf P}_{k} \right) \nonumber \\
& + \text{trace} \! \left( ({\bf Z}_{k} - {\bf Z}_{k+1}){\bf Z}_{k+1}^{\text{H}}{\bf P}_{k} \right) \nonumber \\
& + \frac{\lambda}{2}\text{trace} \! \left( 2{\bf Q}_{k}{\bf Y}({\bf Z}_{k+1} - {\bf Z}_{k})^{\text{H}} \right) \nonumber \\
& + \frac{\lambda}{2}\text{trace} \! \left( {\bf Q}_{k}({\bf Z}_{k}{\bf Z}_{k}^{\text{H}} - {\bf Z}_{k+1}{\bf Z}_{k+1}^{\text{H}}) \right) \nonumber \\
= ~ & \text{trace} \! \left( \frac{1}{2}({\bf Z}_{k} - {\bf Z}_{k+1})({\bf Z}_{k} - {\bf Z}_{k+1})^{\text{H}}{\bf P}_{k} \right) \nonumber \\
& + \text{trace} \! \left( ({\bf Z}_{k} - {\bf Z}_{k+1}){\bf Z}_{k+1}^{\text{H}}{\bf P}_{k} \right) \nonumber \\
& + \frac{\lambda}{2}\text{trace} \! \left( 2{\bf Q}_{k}{\bf Y}({\bf Z}_{k+1} - {\bf Z}_{k})^{\text{H}} \right) \nonumber \\
& + \frac{\lambda}{2}\text{trace} \! \left( {\bf Q}_{k}({\bf Z}_{k} - {\bf Z}_{k+1})({\bf Z}_{k} - {\bf Z}_{k+1})^{\text{H}} \right) \nonumber \\
& + \lambda ~\! \text{trace} \! \left( {\bf Q}_{k}({\bf Z}_{k} - {\bf Z}_{k+1}){\bf Z}_{k+1}^{\text{H}} \right) \nonumber \\
\label{ineq_2}
\geq ~ & \text{trace} \! \left( ({\bf Z}_{k} - {\bf Z}_{k+1}){\bf Z}_{k+1}^{\text{H}}{\bf P}_{k} \right) \nonumber \\
& + \text{trace} \! \left( \lambda{\bf Q}_{k}{\bf Y}({\bf Z}_{k+1} - {\bf Z}_{k})^{\text{H}} \right) \nonumber \\
& + \text{trace} \! \left( \lambda{\bf Q}_{k}({\bf Z}_{k} - {\bf Z}_{k+1}){\bf Z}_{k+1}^{\text{H}} \right)  \\
= ~ & \text{trace} \! \left( ({\bf Z}_{k} - {\bf Z}_{k+1}){\bf Z}_{k+1}^{\text{H}}({\bf P}_{k} + \lambda{\bf Q}_{k}) \right) \nonumber \\
& + \text{trace} \! \left( \lambda{\bf Q}_{k}{\bf Y}({\bf Z}_{k+1} - {\bf Z}_{k})^{\text{H}} \right) \nonumber \\
= ~ & \text{trace} \! \left( ({\bf P}_{k} + \lambda{\bf Q}_{k}){\bf Z}_{k+1}({\bf Z}_{k} - {\bf Z}_{k+1})^{\text{H}} \right) \nonumber \\ 
& + \text{trace} \! \left( \lambda{\bf Q}_{k}{\bf Y}({\bf Z}_{k+1} - {\bf Z}_{k})^{\text{H}} \right) \nonumber \\
\label{eq_7}
= ~ & \text{trace} \! \left( \lambda{\bf Q}_{k}{\bf Y}({\bf Z}_{k} - {\bf Z}_{k+1})^{\text{H}} \right) \nonumber \\
& + \text{trace} \! \left( \lambda{\bf Q}_{k}{\bf Y}({\bf Z}_{k+1} - {\bf Z}_{k})^{\text{H}} \right) \\
= ~ & 0, \nonumber
\end{align}
\endgroup
which indicates that $f({\bf Z})$ is a non-increasing function with sequence $\{{\bf Z}_{k}\}$ generated by the IRLS procedure (\ref{irls_progress_noiseless}). The first equality is based on the definition of the objective function in (\ref{prob_noiseless}), and the second equality uses $\|{\bf Z}\|_{*} = \text{trace} \left( \left({\bf Z}{\bf Z}^{\mathrm{H}}\right)^{-\frac{1}{2}} \right)$ when $M < T$. Inequality (\ref{ineq_1}) holds thanks to Lemmata \ref{lemmaTrace} and \ref{lemmaL21norm}. Inequality (\ref{ineq_2}) holds because of $\text{trace} \! \left( ({\bf Z}_{k} - {\bf Z}_{k+1})({\bf Z}_{k} - {\bf Z}_{k+1})^{\text{H}}{\bf P}\right) \geq 0$ and $\text{trace} \! \left( {\bf Q}({\bf Z}_{k} - {\bf Z}_{k+1})({\bf Z}_{k} - {\bf Z}_{k+1})^{\text{H}} \right) \geq 0$, which result from the fact that ${\bf P}$ and ${\bf Q}$ are symmetric matrices and $\text{trace} \! \left( {\bf X}{\bf X}^{\text{H}}\right) = \|{\bf X}\|_{\text{F}}^{2} \geq 0$. Equality (\ref{eq_7}) holds true according to the KKT condition, i.e., $({\bf P}_{k} + \lambda{\bf Q}_{k}){\bf Z}_{k+1} = \lambda{\bf Q}_{k}{\bf Y}$.

{\color{black}Since $f({\bf Z})$ is a non-increasing, we have 
\begin{align*}
\|{\bf Z}_{k}\|_{*} & = \text{trace} \! \left( \left({\bf Z}{\bf Z}^{\mathrm{H}}\right)^{-\frac{1}{2}} \right) < \text{trace} \! \left( \left({\bf Z}{\bf Z}^{\mathrm{H}} + \mu^{2}{\bf I}\right)^{-\frac{1}{2}} \right) \\
& = \|[{\bf Z}_{k} , \mu{\bf I}]\|_{*} \leq f({\bf Z}_{k}) \leq f({\bf Z}_{0}),
\end{align*}
which indicates that the sequence $\{{\bf Z}_{k}\}$ is bounded in terms of its nuclear norm.

Besides, combining (\ref{irls_progress_noiseless}) and (\ref{ineq_1}) yields
\begingroup
\allowdisplaybreaks
\begin{align}
& f({\bf Z}_{k}) - f({\bf Z}_{k+1}) \nonumber \\
\geq ~ & \text{trace} \! \left( \frac{1}{2}\left({\bf Z}_{k}{\bf Z}_{k}^{\text{H}} - {\bf Z}_{k+1}{\bf Z}_{k+1}^{\text{H}}\right){\bf P}_{k} \! \right) \nonumber \\
& + \frac{\lambda}{2} \text{trace} \! \left( {\bf Q}_{k} \left[ \left({\bf Y} \! - \! {\bf Z}_{k}\right)\left({\bf Y} \! - \! {\bf Z}_{k}\right)^{\text{H}}  \right. \right. \nonumber \\
& \qquad \qquad \quad  \left. \left. -  \left({\bf Y} \! - \! {\bf Z}_{k+1}\right)\left({\bf Y} \! - \! {\bf Z}_{k+1}\right)^{\text{H}} \right] \right) \nonumber \\
= ~ & \frac{1}{2} \text{trace} \! \left( ({\bf P}_{k} + \lambda{\bf Q}_{k})({\bf Z}_{k} - {\bf Z}_{k+1})({\bf Z}_{k} - {\bf Z}_{k+1})^{\mathrm{H}} \right) \nonumber \\
\label{eq_eig_trace}
\geq ~ & \frac{1}{2} \!\! \sum_{i = 1}^{M} \! \zeta_{i} \! ({\bf P}_{k} \!\!+\!\! \lambda{\bf Q}_{k}) \zeta_{M \!- i + 1} \! ( \! ({\bf Z}_{k} \!\!-\!\! {\bf Z}_{k \!+\! 1}) \! ({\bf Z}_{k} \!\!-\!\! {\bf Z}_{k \!+\! 1})^{\! \mathrm{H}} \!) \\
\geq ~ & \frac{1}{2} \zeta_{M} \! ({\bf P}_{k} \!\!+\!\! \lambda{\bf Q}_{k}) \|{\bf Z}_{k} \! - \! {\bf Z}_{k \!+\! 1}\|_{\mathrm{F}}^{2} \geq \frac{1}{2} \zeta_{\mathrm{min}}  \! \times \! \|{\bf Z}_{k} \! - \! {\bf Z}_{k \!+\! 1}\|_{\mathrm{F}}^{2} \nonumber,
\end{align}
\endgroup
where $\zeta_{i}(\cdot)$ denotes the $i$th largest eigenvalue of its input Hermitian matrix, inequality (\ref{eq_eig_trace}) follows from the fact that $\text{trace}({\bf X}{\bf Y}) \geq \sum_{i = 1}^{M} \zeta_{i}({\bf X})\zeta_{M-i+1}({\bf Y})$ holds for any two positive semi-definite matrices ${\bf X}$ and ${\bf Y} \in \mathbb{C}^{M \times M}$ \cite{Lu2015(b)}, and in the last inequality, we have defined $\zeta_{\mathrm{min}} > 0$ as the smallest eigenvalue of ${\bf P}_{k} \!+\! \lambda{\bf Q}_{k}$ over all $k$. Summing all the above inequalities for all $k \geq 0$, we have
\begin{align*}
f({\bf Z}_{0}) \geq \frac{1}{2} \zeta_{\mathrm{min}} \sum_{k = 0}^{\infty}\|{\bf Z}_{k} - {\bf Z}_{k+1}\|_{\mathrm{F}}^{2},
\end{align*}
which implies that $\lim_{k \to \infty}\|{\bf Z}_{k} - {\bf Z}_{k+1}\|_{\mathrm{F}} = 0$.}

\subsection{Proof of Theorem \ref{boundedNoiseless}}
\label{proof_boundedNoiseless}
For any matrices ${\bf Y}$ and ${\bf Z}$ $\in \mathbb{C}^{M \times T}$, we have
\begin{align}
\|[{\bf Y} \! - \! {\bf Z}, \mu{\bf 1}]\|_{2,1} = ~ & \sum_{i =1}^{M} \sqrt{\|({\bf Y} \! - \! {\bf Z})_{i,:}\|_{2}^{2} +\mu^{2}} \nonumber \\ 
\geq ~ & \sum_{i=1}^{M} |\mu| ~ = ~ |\mu| M \nonumber \\
\|[{\bf Z}, \mu{\bf I}]\|_{*} = ~ & \text{trace} \! \left( \left({\bf Z}{\bf Z}^{\text{H}} + \mu^{2}{\bf I}\right)^{\frac{1}{2}} \right) \nonumber \\
\label{ineq_nuclear}
\geq ~ & \left(\text{trace} \! \left({\bf Z}{\bf Z}^{\text{H}} + \mu^{2}{\bf I} \right) \right)^{\frac{1}{2}} \\
= ~ & \left(\text{trace} \! \left( {\bf Z}{\bf Z}^{\text{H}}\right) + M\mu^{2} \right)^{\frac{1}{2}} \nonumber \\
\geq ~ & \left( M\mu^{2} \right)^{\frac{1}{2}} ~ = ~ |\mu|\sqrt{M}. \nonumber
\end{align}
Inequality (\ref{ineq_nuclear}) holds because $\text{trace} \! \left( {\bf X}^{\frac{1}{2}} \right) \! = \! \sum_{i} \! \sqrt{\zeta_{i}} \geq \sqrt{\sum_{i} \zeta_{i} } = \left(\text{trace} \! \left( {\bf X} \right) \right)^{\frac{1}{2}}$ for any symmetric matrix ${\bf X}$, with $\zeta_{i}$ being the eigenvalue of ${\bf X}$. Therefore, the objective function in (\ref{prob_noiseless}) is bounded below as $f({\bf Z}) = \|[{\bf Z}, \mu{\bf I}]\|_{*} + \lambda \|[{\bf Y} \! - \! {\bf Z}, \mu{\bf 1}]\|_{2,1} \geq |\mu| (\sqrt{M} + \lambda M)$.

{\color{black}
\subsection{Proof of Theorem \ref{KKTpointNoiseless}}
\label{proof_KKTpointNoiseless}
Denote the limit point of sequence $\{{\bf Z}_{k}\}$ as ${\bf Z}_{k+1}$. Then, according to $\lim_{k \to \infty} \|{\bf Z}_{k} - {\bf Z}_{k+1}\|_{\mathrm{F}} = 0$ in Theorem \ref{nonincreasingNoiseless} and (\ref{irls_progress_noiseless}), we have $ {\bf Z}_{k+1} = \lambda({\bf P}_{k+1} + \lambda{\bf Q}_{k+1})^{-1}{\bf Q}_{k+1}{\bf Y}$, that is, ${\bf P}_{k+1}{\bf Z}_{k+1} + \lambda{\bf Q}_{k+1}({\bf Z}_{k+1} - {\bf Y}) = {\bf 0}$. This indicates that ${\bf Z}_{k+1}$ satisfies the KKT condition. Since Problem (\ref{prob_noiseless}) is convex w.r.t. ${\bf Z}$, the stationary point is globally optimal.
}

\subsection{Proof of Theorem \ref{nonincreasingNoisy}}
\label{proof_nonincreasingNoisy}
Similar to the proof of Theorem \ref{nonincreasingNoiseless}, we calculate the difference between the objective function values in two successive iterations as
\begingroup
\allowdisplaybreaks
\begin{align}
& f({\bf Z}_{k}, {\bf V}_{k}) - f({\bf Z}_{k+1}, {\bf V}_{k+1}) \nonumber \\
= ~ & \frac{1}{2}\|{\bf Y} \! - \! {\bf Z}_{k} \! - \! {\bf V}_{k}\|_{\text{F}}^{2} - \! \frac{1}{2}\|{\bf Y} \! - \! {\bf Z}_{k+1} \! - \! {\bf V}_{k+1}\|_{\text{F}}^{2} \nonumber \\
& + \lambda_{1}\|[{\bf Z}_{k}, \mu{\bf I}]\|_{*} - \lambda_{1}\|[{\bf Z}_{k+1}, \mu{\bf I}]\|_{*} \nonumber  \\
& + \lambda_{2}\|[{\bf V}_{k}, \mu{\bf 1}]\|_{2,1} - \lambda_{2}\|[{\bf V}_{k+1}, \mu{\bf 1}]\|_{2,1} \nonumber \\
\geq ~ & \frac{1}{2}\|{\bf Y} \! - \! {\bf Z}_{k} \! - \! {\bf V}_{k}\|_{\text{F}}^{2} - \! \frac{1}{2}\|{\bf Y} \! - \! {\bf Z}_{k+1} \! - \! {\bf V}_{k+1}\|_{\text{F}}^{2} \nonumber \\
& + \lambda_{1} ~\! \text{trace} \! \left( \frac{1}{2}\left({\bf Z}_{k}{\bf Z}_{k}^{\text{H}} - {\bf Z}_{k+1}{\bf Z}_{k+1}^{\text{H}}\right){\bf P}_{k} \right) \nonumber \\
& + \frac{\lambda_{2}}{2}\text{trace} \! \left( {\bf Q}_{k}\left({\bf V}_{k}{\bf V}_{k}^{\text{H}} - {\bf V}_{k+1}{\bf V}_{k+1}^{\text{H}} \right)  \right) \nonumber \\
= ~ & \frac{1}{2}\|{\bf Y} \! - \! {\bf Z}_{k} \! - \! {\bf V}_{k}\|_{\text{F}}^{2} - \! \frac{1}{2}\|{\bf Y} \! - \! {\bf Z}_{k+1} \! - \! {\bf V}_{k+1}\|_{\text{F}}^{2} \nonumber \\
& + \lambda_{1} ~\! \text{trace} \! \left( \frac{1}{2}\left({\bf Z}_{k} - {\bf Z}_{k+1}\right)\left({\bf Z}_{k} - {\bf Z}_{k+1}\right)^{\text{H}}{\bf P}_{k} \right)  \nonumber \\
& + \lambda_{1} ~\! \text{trace} \! \left( \left({\bf Z}_{k} - {\bf Z}_{k+1} \right){\bf Z}_{k+1}^{\text{H}} {\bf P}_{k} \right) \nonumber \\
& + \frac{\lambda_{2}}{2}\text{trace} \! \left( {\bf Q}_{k}\left({\bf V}_{k} - {\bf V}_{k+1} \right) \left({\bf V}_{k} - {\bf V}_{k+1} \right)^{\text{H}} \right) \nonumber \\
\label{difference_trace_noisy}
& + \lambda_{2} ~\! \text{trace} \! \left( {\bf Q}_{k}({\bf V}_{k} - {\bf V}_{k+1}){\bf V}_{k+1}^{\text{H}} \right) \\
\geq ~ & \frac{1}{2}\text{trace} \! \left( ({\bf Z}_{k} - {\bf Z}_{k+1})({\bf Z}_{k} - {\bf Z}_{k+1})^{\text{H}} \right. \nonumber \\
& \qquad \quad \left. + ~ ({\bf V}_{k} - {\bf V}_{k+1})({\bf V}_{k} - {\bf V}_{k+1})^{\text{H}} \right) \nonumber \\
& + \text{trace} \! \left( (-{\bf Y} + {\bf Z}_{k+1})({\bf Z}_{k} - {\bf Z}_{k+1})^{\text{H}}\right. \nonumber \\
& \qquad \quad \left. + ~ (-{\bf Y} + {\bf V}_{k+1})({\bf V}_{k} - {\bf V}_{k+1})^{\text{H}} \right. \nonumber \\
& \qquad \quad \left. + ~ {\bf Z}_{k}{\bf V}_{k}^{\text{H}} - {\bf Z}_{k+1}{\bf V}_{k+1}^{\text{H}} \right) \nonumber \\
& + \lambda_{1} ~\! \text{trace} \! \left( \left({\bf Z}_{k} - {\bf Z}_{k+1} \right){\bf Z}_{k+1}^{\text{H}} {\bf P}_{k} \right) \nonumber \\ 
& + \lambda_{2} ~\! \text{trace} \! \left( {\bf Q}_{k}({\bf V}_{k} - {\bf V}_{k+1}){\bf V}_{k+1}^{\text{H}} \right) \nonumber \\
= ~ & \frac{1}{2}\text{trace} \! \left( ({\bf Z}_{k} - {\bf Z}_{k+1})({\bf Z}_{k} - {\bf Z}_{k+1})^{\text{H}} \right. \nonumber \\
& \qquad \quad \left. + ~ ({\bf V}_{k} - {\bf V}_{k+1})({\bf V}_{k} - {\bf V}_{k+1})^{\text{H}} \right) \nonumber \\
& + \text{trace} \! \left( {\bf Z}_{k}{\bf V}_{k}^{\text{H}} - {\bf Z}_{k+1}{\bf V}_{k+1}^{\text{H}} \right) \nonumber \\
& + \text{trace} \! \left( -{\bf V}_{k}({\bf Z}_{k} - {\bf Z}_{k+1})^{\text{H}} \right) \nonumber \\
& + \text{trace} \! \left( -{\bf Z}_{k}({\bf V}_{k} - {\bf V}_{k+1})^{\text{H}} \right) \nonumber \\
= ~ & \frac{1}{2}\text{trace} \! \left( ({\bf Z}_{k} - {\bf Z}_{k+1})({\bf Z}_{k} - {\bf Z}_{k+1})^{\text{H}} \right. \nonumber \\
& \qquad \quad \left. + ~ ({\bf V}_{k} - {\bf V}_{k+1})({\bf V}_{k} - {\bf V}_{k+1})^{\text{H}} \right) \nonumber \\
& + \text{trace} \! \left( ({\bf Z}_{k} - {\bf Z}_{k+1})({\bf V}_{k+1} - {\bf V}_{k})^{\text{H}} \right) \nonumber \\
= ~ & \frac{1}{2}\text{trace} \! \left( ({\bf Z}_{k} - {\bf Z}_{k+1} - {\bf V}_{k} + {\bf V}_{k+1}) \cdot \right. \nonumber \\
& \qquad \quad \left. ({\bf Z}_{k} - {\bf Z}_{k+1} - {\bf V}_{k} + {\bf V}_{k+1})^{\text{H}} \right) \nonumber \\
= ~ & \frac{1}{2}\| {\bf Z}_{k} - {\bf Z}_{k+1} - {\bf V}_{k} + {\bf V}_{k+1} \|_{\text{F}}^{2} ~ \geq ~ 0, \nonumber
\end{align}
\endgroup
which indicates that $f({\bf Z} , {\bf V})$ is a non-increasing function. {\color{black}Since $f({\bf Z} , {\bf V})$ is non-increasing, we have
\begin{align*}
& \min\{\lambda_{1}, \lambda_{2}\}\left(\|[{\bf Z}_{k}, \mu{\bf I}]\|_{*}+ \|[{\bf V}_{k}, \mu{\bf 1}]\|_{2,1} \right) \\
\leq ~ & \lambda_{1}\|[{\bf Z}_{k}, \mu{\bf I}]\|_{*}+ \lambda_{2}\|[{\bf V}_{k}, \mu{\bf 1}]\|_{2,1} \\
\leq ~ & f({\bf Z}_{k}, {\bf V}_{k}) ~ \leq ~ f({\bf Z}_{0}, {\bf V}_{0}).
\end{align*}
Hence, $\|{\bf Z}_{k}\|_{*} + \|{\bf V}_{k}\|_{2,1} < \|[{\bf Z}_{k}, \mu{\bf I}]\|_{*} + \|[{\bf V}_{k}, \mu{\bf 1}]\|_{2,1} \leq \frac{f({\bf Z}_{0}, {\bf V}_{0})}{\min\{\lambda_{1}, \lambda_{2}\}}$, which indicates that the sequence $\{({\bf Z}_{k} , {\bf V}_{k})\}$ is bounded.

Besides, combining (\ref{irls_progress_noisy}) and (\ref{difference_trace_noisy}) yields
\begingroup
\allowdisplaybreaks
\begin{align*}
& f({\bf Z}_{k}, {\bf V}_{k}) - f({\bf Z}_{k+1}, {\bf V}_{k+1}) \\
\geq ~ & \frac{\lambda_{1}}{2} ~\! \text{trace} \! \left( \left({\bf Z}_{k} - {\bf Z}_{k+1}\right)\left({\bf Z}_{k} - {\bf Z}_{k+1}\right)^{\text{H}}{\bf P}_{k} \right)   \\
& + \frac{\lambda_{2}}{2}\text{trace} \! \left( {\bf Q}_{k}\left({\bf V}_{k} - {\bf V}_{k+1} \right) \left({\bf V}_{k} - {\bf V}_{k+1} \right)^{\text{H}} \right)  \\
& +  \frac{1}{2}\text{trace} \! \left( ({\bf Z}_{k} - {\bf Z}_{k+1})({\bf Z}_{k} - {\bf Z}_{k+1})^{\mathrm{H}} \right) \\
\geq ~ & \frac{\lambda_{1}}{2} ~\! \text{trace} \! \left( \left({\bf Z}_{k} - {\bf Z}_{k+1}\right)\left({\bf Z}_{k} - {\bf Z}_{k+1}\right)^{\text{H}}{\bf P}_{k} \right)   \\
& + \frac{\lambda_{2}}{2}\text{trace} \! \left( {\bf Q}_{k}\left({\bf V}_{k} - {\bf V}_{k+1} \right) \left({\bf V}_{k} - {\bf V}_{k+1} \right)^{\text{H}} \right) \\
\geq ~ & \frac{\lambda_{1}}{2}\sum_{i}^{M}\zeta_{i}({\bf P}_{k}) \zeta_{M - i + 1}( ({\bf Z}_{k} \!-\! {\bf Z}_{k+1})({\bf Z}_{k} \!-\! {\bf Z}_{k+1})^{\mathrm{H}} ) \\
& + \frac{\lambda_{2}}{2}\sum_{i}^{M} \! \zeta_{i}({\bf Q}_{k}) \zeta_{M \! - i + 1} \! ( \! ( \! {\bf V}_{k} \!-\! {\bf V}_{k+1} \! )( \! {\bf V}_{k} \!-\! {\bf V}_{k+1} \!)^{\mathrm{H}} \! ) \\
\geq ~ & \frac{\lambda_{1}}{2} \zeta_{\mathrm{min}}^{({\bf P})} \times \|{\bf Z}_{k} \!-\! {\bf Z}_{k+1}\|_{\mathrm{F}}^{2} + \frac{\lambda_{1}}{2} \zeta_{\mathrm{min}}^{({\bf Q})} \times \|{\bf V}_{k} \!-\! {\bf V}_{k+1}\|_{\mathrm{F}}^{2},
\end{align*}
where $ \zeta_{\mathrm{min}}^{({\bf P})}$ and $ \zeta_{\mathrm{min}}^{({\bf Q})}$ are the smallest eigenvalues of ${\bf P}_{k}$ and ${\bf Q}_{k}$, respectively, over all $k$. Summing all the above inequalities for all $k \geq 0$, we have
\begin{align*}
f({\bf Z}_{0}, {\bf V}_{0}) \geq & ~ \frac{\lambda_{1}}{2} \zeta_{\mathrm{min}}^{({\bf P})} \sum_{k = 0}^{\infty} \|{\bf Z}_{k} - {\bf Z}_{k+1}\|_{\mathrm{F}}^{2} \\
& +  \frac{\lambda_{2}}{2} \zeta_{\mathrm{min}}^{({\bf Q})} \sum_{k = 0}^{\infty} \|{\bf V}_{k} - {\bf V}_{k+1}\|_{\mathrm{F}}^{2},
\end{align*}
\endgroup
which implies that $\lim_{k \to \infty} \|{\bf Z}_{k} - {\bf Z}_{k+1}\|_{\mathrm{F}} = 0$ and $\lim_{k \to \infty} \|{\bf V}_{k} - {\bf V}_{k+1}\|_{\mathrm{F}} = 0$.}

\subsection{Proof of Theorem \ref{boundedNoisy}}
\label{proof_boundedNoisy}
Considering $\|{\bf Y} - {\bf Z} - {\bf V}\|_{\text{F}}^{2} \geq 0$ and the inequalities in Appendix \ref{proof_boundedNoiseless}, we can prove that the objective function in (\ref{prob_f_ZV}) is bounded below as $f({\bf Z} , {\bf V}) = \frac{1}{2}\|{\bf Y} \! - \! {\bf Z} \! - \! {\bf V}\|_{\text{F}}^{2} + \lambda_{1}\|[{\bf Z}, \mu{\bf I}]\|_{*} + \lambda_{2}\|[{\bf V}, \mu{\bf 1}]\|_{2,1} \geq |\mu|(\lambda_{1}\sqrt{M} + \lambda_{2}M)$.

{\color{black}
\subsection{Proof of Theorem \ref{KKTpointNoisy}}
\label{proof_KKTpointNoisy}
Denote the limit point of the sequence $\{({\bf Z}_{k}, {\bf V}_{k})\}$ as $({\bf Z}_{k+1}, {\bf V}_{k+1})$. Then, according to $\lim_{k \to \infty} \|{\bf Z}_{k} - {\bf Z}_{k+1}\|_{\mathrm{F}} = 0$ and $\lim_{k \to \infty} \|{\bf V}_{k} - {\bf V}_{k+1}\|_{\mathrm{F}} = 0$ in Theorem \ref{nonincreasingNoisy} and (\ref{irls_progress_noisy}), we have
\begin{align*}
\left\{ \!\!
\begin{array}{l}
{\bf Z}_{k+1} = ({\bf I} + \lambda_{1}{\bf P}_{k+1})^{-1}({\bf Y} - {\bf V}_{k+1}) \\
{\bf V}_{k+1} \! = ({\bf I} + \lambda_{2}{\bf Q}_{k+1})^{-1}({\bf Y} - {\bf Z}_{k+1}),
\end{array}
\right.
\end{align*}
which is the KKT condition of Problem (\ref{prob_f_ZV}). Since Problem (\ref{prob_f_ZV}) is convex w.r.t. ${\bf Z}$ and ${\bf V}$, the stationary point is globally optimal.
}

\section*{ACKNOWLEDGMENT}

The work of Huiping Huang is supported by the Graduate School CE within the Centre for Computational Engineering at Technische Universität Darmstadt.

{\footnotesize{
\bibliographystyle{IEEEtran}
\balance
\bibliography{strings, refs}
}}

\end{document}